 \newcommand{\Rmnum}[1]{\expandafter\@slowromancap\romannumeral #1@}
\newtheorem{theorem}{Theorem}[section]
\newtheorem{proposition}[theorem]{Proposition}
\newtheorem{assumption}[theorem]{Assumption}
\newcommand{\C}{{\mathbb C}}
\newcommand{\be}{\begin{equation}}
\newcommand{\ee}{\end{equation}}
\newcommand{\bea}{\begin{eqnarray}}
\newcommand{\eea}{\end{eqnarray}}
\newcommand{\ba}{\begin{array}}
\newcommand{\ea}{\end{array}}
\newcommand{\ol}{\overline}
\newcommand{\id}{\mathbb{I}}
\newcommand{\re}{\mathrm{Re}}
\newcommand{\eps}{\varepsilon}
\newcommand{\Lam}{\Lambda}
\newcommand{\gam}{\gamma}
\newcommand{\Om}{\Omega}
\newcommand{\dta}{\delta}
\newcommand{\tha}{\theta}
\newcommand{\pt}{\partial}
\numberwithin{equation}{section}
\begin{document}
\title[RHP for the SS equation on the interval]{The unified transform method for the Sasa-Satsuma equation on the interval}

\author[J.Xu]{Jian Xu*}
\address{College of Science\\
University of Shanghai for Science and Technology\\
Shanghai 200093\\
People's  Republic of China}
\email{correspondence author: jianxu@usst.edu.cn}

\author[Q.Zhu]{Qiaozhen Zhu}
\address{School of Mathematical Sciences, Institute of Mathematics and Key Laboratory of Mathematics for Nonlinear Science\\
Fudan University\\
Shanghai 200433\\
People's  Republic of China}
\email{qiaozhenzhu13@fudan.edu.cn}

\author[E.Fan]{Engui Fan}
\address{School of Mathematical Sciences, Institute of Mathematics and Key Laboratory of Mathematics for Nonlinear Science\\
Fudan University\\
Shanghai 200433\\
People's  Republic of China}
\email{faneg@fudan.edu.cn}

\keywords{Sasa-Satsuma equation, Initial-boundary value problem, Unified transform method, Riemann-Hilbert problem}

\date{\today}

\begin{abstract}
We present a Riemann-Hilbert problem formalism for the initial-boundary value problem for the Sasa-Satsuma(SS) equation on the finite interval. Assume that the solution existes, we show that this solution can be expressed in terms of the solution of a $3\times 3$ Riemann-Hilbert problem. The relevant jump matrices are explicitly given in terms of the three matrix-value spectral functions $s(k)$, $S(k)$ and $S_L(k)$, which in turn are defined in terms of the initial values, boundary values at $x=0$ and boundary values at $x=L$, respectively. However, for a well-posed problem, only part of the boundary values can be prescribed, the remaining boundary data cannot be independently specified, but are determined by the so-called global relation. Here, we analyze the global relation to characterize the unknown boundary values in terms of the given initial and boundary data.

\end{abstract}

\maketitle

\section{Introduction}

Integrable nonlinear partial differential equations (PDEs) can be written as the compatibility condition of two linear eigenvalue equations,
which are called a Lax pair \cite{lax}. They can be analyzed by means of the
Inverse Scattering Transform (IST) formalism which was discovered in 1967 by Gardner, Greene, Kruskal, Miura \cite{ggkm}. Until the 1990s the IST methodology was pursued almost entirely for pure initial value problems. However, in many laboratory and field situations, the wave motion is initiated by what corresponds to the imposition of boundary conditions rather than initial conditions. This naturally leads to the formulation of an initial-boundary value (IBV) problem instead of a pure initial value problem. However, the presence of a boundary presents new challenges.
\par
In 1997, Fokas announced a new unified approach for the analysis of IBV problems for linear and nonlinear integrable PDEs \cite{f1,f2}(see also \cite{f3}). The Fokas method provides a generalization of the IST formalism from initial value to IBV problems. Just like the IST on the line, the Fokas method yields an expression for the solution of an IBV problem in terms of the solution of a Riemann-Hilbert problem.
We know the IST method is usually based on analyzing the $x-part$ of the Lax pairis to get the scattering data. The $t-part$ of the Lax pairs is only used to determine the time evolution of the scattering data. However, the Fokas method is based on the {\it simultaneous} spectral analysis of both $x-part$ and $t-part$ of the Lax pair. Hence, we need all the boundary value data to formulate a Riemann-Hilbert problem. But for a well-posed problem, only part of the boundary values can be prescribed, the remaining boundary data cannot be independently specified, but are determined by the so-called global relation, which is an algebraic relation coupling the relevant spectral functions. Thus, we need analyze this global relation to characterize the unknown boundary value data before we formulate a Riemann-Hilbert problem. This is usually the major difficulty of the Fokas method.

\par
Over the last two decades, the Fokas method was successfully used to analyze boundary value problems for several of the most important integrable equations with $2\times 2$ Lax pairs, such as the Korteweg-de Vries \cite{fi1994}, the nonlinear Schr\"odinger \cite{fis2005}, and the sine-Gordon \cite{fi1992} equations, and so on. However, in \cite{l2012}, the Fokas methodology was further developed to include the case of equations with $3\times 3$ Lax pairs. Since the work of \cite{l2012}, the IBV problems were considered on the half-line. It is a natural problem to extend the Fokas methodology to the case of IBV problems on an finite interval.

\par

In a previous paper \cite{jf3}, we successfully extended the methodology to the case of two-component nonlinear Schr\"odinger equation on the interval. In this paper, we try to analyze the IBV problem for the Sasa-Satsuma equation on the finite interval. We use the idea of choosing the integral contours to determine the appropriate functions to formulate a Riemann-Hilbert problem in \cite{jf3}. However, the most major difficulty and differences from the two-component nonlinear Schr\"odinger equation case is to analyze the global relation to express the unknown boundary value date in terms of the initial value data and the prescribed boundary data in the Sasa-Satsuma equation case. It is not suitable for Sasa-Satsuma equation to do the analysis in spectral domain as \cite{jf3}. However, there is another method to analyze the global relation. Here, we analyse the global relation by using it, i.e. a Gelfand-Levitan-Marchenko(GLM) representation to derive an expression to characterize the unknown boundary values. Since the order of the derivative is higher than the two-component nonlinear Schr\"odinger equation case, the analysis of the global relation of the Sasa-Satsuma equation will be more complicate.

\par
{\bf Organization of the paper:} In section 2 we state the main equation considered in this paper, and perform the spectral analysis of the associated Lax pair. Then, we can formulate the main Riemann-Hilbert problem. It is shown that the solution of the IBV problem for the Sasa-Satsuma equation on the finite interval can be recovered by the solution of this Riemann-Hilbert problem in section 3. The most important result, which expresses the unknown boundary value data in terms of the initial value and prescribed boundary value date by analysing the global relation, is obtained in section 4.

\section{Spectral analysis}

\subsection{Sasa-Satsuma equation}

\par
It is well known that the nonlinear Schr\"odinger(NLS) equation
\be \label{NLSe}
iq_T+\frac{1}{2}q_{XX}+|q|^2q=0
\ee
describes slowly varying wave envelopes in dispersive media and arises in various physical systems such as water waves, plasma physics, solid-state physics and nonlinear optics. One of the most successful among them is the description of optical solitons in fibers. However, by the advancement of experomenal accuracy, several phenomena which can not be explained by equation (\ref{NLSe}) have been observed. In order to understand such phenomena, Kodama and Hasegawa \cite{KH1987} proposed a higer-order nonlinear Schr\"odinger equation
\be \label{HNLSe}
iq_T+\frac{1}{2}q_{XX}+|q|^2q+i\eps \{\beta_1q_{XXX}+\beta_2|q|^2q_{X}+\beta_3q(|q|^2)_X\}=0.
\ee
\par
In general, equation (\ref{HNLSe}) may not be completely integrable. However, if some restrictions are imposed on the real parameters $\beta_1,\beta_2$ and $\beta_3$, it becomes integrable, then we can apply the IST to solve its initial value problems. Until now, the following four cases besides the NLS equation itself are konwn to be integrable:
\begin{itemize}
\item the Chen-Lee-Liu-type derivative NLS equation ($\beta_1:\beta_2:\beta_3$=0:1:1),
\item the Kaup-Newell-type derivative NLS equation ($\beta_1:\beta_2:\beta_3$=0:1:0),
\item the Hirota equation($\beta_1:\beta_2:\beta_3$=1:6:0),
\item the Sasa-Satsuma equation($\beta_1:\beta_2:\beta_3$=1:6:3).
       \be\label{SSe}
       iq_T+\frac{1}{2}q_{XX}+|q|^2q+i\eps (q_{XXX}+6|q|^2q_X+3q(|q|^2)_X)=0
       \ee
\end{itemize}

Let us consider the last case, i.e. $\beta_1:\beta_2:\beta_3$=1:6:3.
According to \cite{ss} we introduce variable transformations,
\be \label{vartrans}
u(x,t)=q(X,T)\exp\{\frac{-i}{6\eps}(X-\frac{T}{18\eps})\},\quad t=T,\quad x=X-\frac{T}{12\eps}.
\ee
Then equation (\ref{SSe}) is reduce to a complex modified KdV-type equation
\be\label{KdV-typee}
u_t+\eps\{u_{xxx}+6|u|^2u_x+3u(|u|^2)_x\}=0.
\ee
In the following , we consider the equation (\ref{KdV-typee}) instead of the equation (\ref{SSe}).

\par
The initial value problem for the Sasa-Satsuma equation (\ref{KdV-typee}) on the whole line has been obtained in \cite{ss} by the IST method. And, the IBV problem on the half-line domain was considered in \cite{jfss}. However, in what follows, we consider the IBV problem for the equation (\ref{KdV-typee}) on the interval $\Omega=\{(x,t)|0\le x\le L, 0\le t\le T\}$, where $L$ is a positive finite constant, $T$ is a given finite time, in this paper, that is,
\be\label{IBV}
\ba{lll}
\mbox{Initial value: }&u_0(x)=u(x,0),&\\
\mbox{Dirichlet Boundary value: }& g_0(t)=u(0,t),& f_0(t)=u(L,t),\\
\mbox{First Neumann Boundary value: }& g_1(t)=u_x(0,t),& f_1(t)=u_x(L,t),\\
\mbox{Second Neumann Boundary value: }& g_2(t)=u_{xx}(0,t),& f_2(t)=u_{xx}(L,t).
\ea
\ee

\par

\subsection{Lax pair}

The Lax pair of equation (\ref{KdV-typee}) is (see, \cite{ss}),
\be\label{Lax-x}
\Psi_x=U\Psi,\quad \Psi_t=V\Psi,\quad \Psi=\left(\ba{ccc}\Psi_1&\Psi_2&\Psi_3\ea\right)^T.
\ee
where
\be\label{Udef}
U=-ik\Lam+V_1,\quad V=-4i\eps k^3\Lam+V_2.
\ee
here
\be\label{Lamdef}
\Lam=\left(\ba{ccc}1&0&0\\0&1&0\\0&0&-1\ea\right),V_1=\left(\ba{ccc}0&0&u\\0&0&\bar u\\-\bar u&-u&0\ea\right),V_2=4k^2V_2^{(2)}+2ikV_2^{(1)}+V_2^{(0)}.
\ee
where
\begin{subequations}
\be
\ba{ll}
V_2^{(2)}=\eps\left(\ba{ccc}0&0&u\\0&0&\bar u\\-\bar u&-u&0\ea\right),&
V_2^{(1)}=\eps\left(\ba{ccc}|u|^2&u^2&u_x\\\bar u^2&|u|^2&\bar u_x\\\bar u_x&u_x&-2|u|^2\ea\right),
\ea
\ee
\be
V_2^{(0)}=-4|u|^2\eps\left(\ba{ccc}0&0&u\\0&0&\bar u\\-\bar u&-u&0\ea\right)-\eps\left(\ba{ccc}0&0&u_{xx}\\0&0&\bar u_{xx}\\-\bar u_{xx}&-u_{xx}&0\ea\right)+\eps(u\bar u_x-u_x\bar u)\left(\ba{ccc}1&0&0\\0&-1&0\\0&0&0\ea\right)
\ee
\end{subequations}
In the following, we let $\eps=1$ for the convenient of the analysis.

\subsection{The closed one-form}
Suppose that $u(x,t)$ is sufficiently smooth function of $(x,t)$ in the finite interval domain $\Om=\{0<x<L,0<t<T\}$. Introducing a new eigenfunction $\mu(x,t,k)$ by
\be\label{neweigfun}
\Psi(x,t,k)=\mu(x,t,k) e^{-i\Lam kx-4i\Lam k^3t}
\ee
then we find the Lax pair equations
\be\label{muLaxe}
\left\{
\ba{l}
\mu_x+[ik\Lam,\mu]=V_1\mu,\\
\mu_t+[4ik^3\Lam,\mu]=V_2\mu.
\ea
\right.
\ee
Letting $\hat A$ denotes the operators which acts on a $3\times 3$ matrix $X$ by $\hat A X=[A,X]$, the equations in (\ref{muLaxe}) can be written in differential form as
\be\label{mudiffform}
d(e^{(ikx+4ik^3t)\hat \Lam}\mu)=W,
\ee
where $W(x,t,k)$ is the closed one-form defined by
\be\label{Wdef}
W=e^{(ikx+4ik^3t)\hat \Lam}(V_1dx+V_2dt)\mu.
\ee

\subsection{The $\mu_j$'s definition}
We define four eigenfunctions $\{\mu_j\}_1^4$ of (\ref{muLaxe}) by the Volterra integral equations
\be\label{mujdef}
\mu_j(x,t,k)=\id+\int_{\gam_j}e^{(-i kx-4i k^3t)\hat \Lam}W_j(x',t',k).\qquad j=1,2,3,4.
\ee
where $\id$ denotes the $3\times 3$ identity matrix, $W_j$ is given by (\ref{Wdef}) with $\mu$ replaced with $\mu_j$, and the contours $\{\gam_j\}_1^4$ are showed in Figure \ref{fig-1}.
\begin{figure}[th]
\centering
\includegraphics{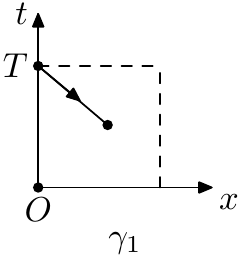}
\includegraphics{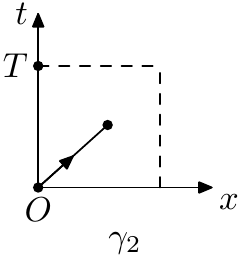}
\includegraphics{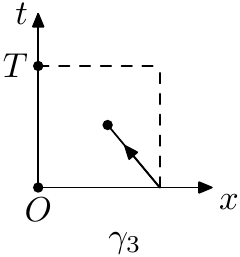}
\includegraphics{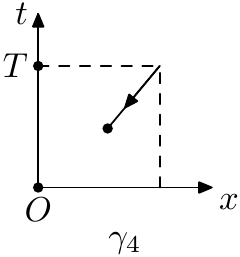}
\caption{The four contours $\gam_1,\gam_2,\gam_3$ and $\gam_4$ in the $(x,t)-$domain.}\label{fig-1}
\end{figure}

The first, second and third column of the matrix equation (\ref{mujdef}) involves the exponentials
\be
\ba{ll}
\mbox{$[\mu_j]_1$:}&e^{2ik(x-x')+8ik^3(t-t')},\\
\mbox{$[\mu_j]_2$:}&e^{2ik(x-x')+8ik^3(t-t')},\\
\mbox{$[\mu_j]_3$:}&e^{-2ik(x-x')-8ik^3(t-t')},e^{-2ik(x-x')-8ik^3(t-t')}.
\ea
\ee
And we have the following inequalities on the contours:
\be
\ba{llll}
\gam_1:&x-x'\ge 0,t-t'\le 0,&
\gam_2:&x-x'\ge 0,t-t'\ge 0,\\
\gam_3:&x-x'\le 0,t-t'\ge 0,&
\gam_4:&x-x'\le 0,t-t'\le 0.
\ea
\ee
So, these inequalities imply that the functions $\{\mu_j\}_1^4$ are bounded and analytic for $k\in\C$ such that $k$ belongs to
\be\label{mujbodanydom}
\ba{llll}
\mu_1:&(D_2,D_2,D_3);&
\mu_2:&(D_1,D_1,D_4);\\
\mu_3:&(D_3,D_3,D_2);&
\mu_4:&(D_4,D_4,D_1),
\ea
\ee
where $\{D_n\}_1^4$ denote four open, pairwisely disjoint subsets of the complex $k-$plane showed in Figure \ref{fig-2}.
\begin{figure}[th]
\centering
\includegraphics{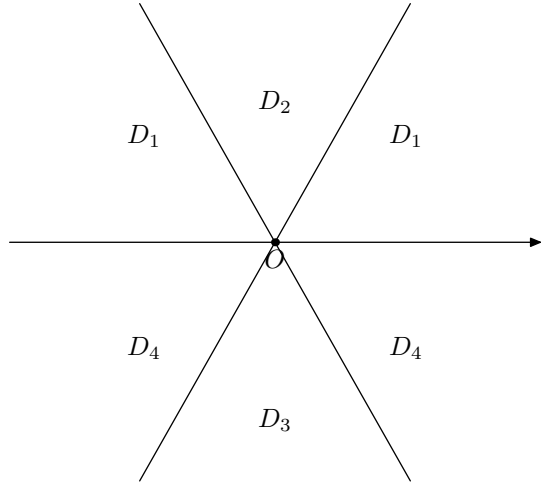}
\caption{The sets $D_n$, $n=1,\ldots ,4$, which decompose the complex $k-$plane.}\label{fig-2}
\end{figure}
And the sets $\{D_n\}_1^4$ has the following properties:
\[
\ba{l}
D_1=\{k\in\C|\re{l_1}=\re{l_2}>\re{l_3},\re{z_1}=\re{z_2}>\re{z_3}\},\\
D_2=\{k\in\C|\re{l_1}=\re{l_2}>\re{l_3},\re{z_1}=\re{z_2}<\re{z_3}\},\\
D_3=\{k\in\C|\re{l_1}=\re{l_2}<\re{l_3},\re{z_1}=\re{z_2}>\re{z_3}\},\\
D_4=\{k\in\C|\re{l_1}=\re{l_2}<\re{l_3},\re{z_1}=\re{z_2}<\re{z_3}\},\\
\ea
\]
where $l_i(k)$ and $z_i(k)$ are the diagonal entries of matrices $-ik\Lam$ and $-4ik^3\Lam$, respectively.

\subsection{The $M_n$'s definition}
For each $n=1,\ldots,4$, define a solution $M_n(x,t,k)$ of (\ref{muLaxe}) by the following system of integral equations:
\be\label{Mndef}
(M_n)_{ij}(x,t,k)=\dta_{ij}+\int_{\gam_{ij}^n}(e^{(-i kx-4i k^3t)\hat \Lam}W_n(x',t',k))_{ij},\quad k\in D_n,\quad i,j=1,2,3.
\ee
where $W_n$ is given by (\ref{Wdef}) with $\mu$ replaced with $M_n$, and the contours $\gam_{ij}^n$, $n=1,\ldots,4$, $i,j=1,2,3$ are defined by
\be\label{gamijndef}
\gam_{ij}^n=\left\{
\ba{lclcl}
\gam_1&if&\re l_i(k)<\re l_j(k)&and&\re z_i(k)\ge\re z_j(k),\\
\gam_2&if&\re l_i(k)<\re l_j(k)&and&\re z_i(k)<\re z_j(k),\\
\gam_3&if&\re l_i(k)\ge\re l_j(k)&and&\re z_i(k)\le \re z_j(k),\\
\gam_4&if&\re l_i(k)\ge\re l_j(k)&and&\re z_i(k)\ge \re z_j(k).
\ea
\right.
\quad \mbox{for }\quad k\in D_n.
\ee
Here, we make a distinction between the contours $\gam_3$ and $\gam_4$ as follows,
\be
\gam^{n}_{ij}=\left\{
\ba{lcl}
\gam_3,&if&\prod_{1\le i<j\le 3} (\re l_i(k)-\re l_j(k))(\re z_i(k)-\re z_j(k))<0,\\
\gam_4,&if&\prod_{1\le i<j\le 3} (\re l_i(k)-\re l_j(k))(\re z_i(k)-\re z_j(k))>0.
\ea
\right.
\ee
The rule chosen in the produce is if $l_m=l_n$, $m$ may not equals $n$, we just choose the subscript is smaller one.
\par

According to the definition of the $\gam^n$, we find that
\be\label{gamndef}
\ba{ll}
\gam^1=\left(\ba{lll}\gam_4&\gam_4&\gam_4\\\gam_4&\gam_4&\gam_4\\\gam_2&\gam_2&\gam_4\ea\right)&
\gam^2=\left(\ba{lll}\gam_3&\gam_3&\gam_3\\\gam_3&\gam_3&\gam_3\\\gam_1&\gam_1&\gam_3\ea\right)\\
\gam^3=\left(\ba{lll}\gam_3&\gam_3&\gam_1\\\gam_3&\gam_3&\gam_1\\\gam_3&\gam_3&\gam_3\ea\right)&
\gam^4=\left(\ba{lll}\gam_4&\gam_4&\gam_2\\\gam_4&\gam_4&\gam_2\\\gam_4&\gam_4&\gam_4\ea\right).
\ea
\ee

\par
The following proposition ascertains that the $M_n$'s defined in this way have the properties required for the formulation of a Riemann-Hilbert problem.
\begin{proposition}
For each $n=1,\ldots,4$, the function $M_n(x,t,k)$ is well-defined by equation (\ref{Mndef}) for $k\in \bar D_n$ and $(x,t)\in \Om$. For any fixed point $(x,t)$, $M_n$ is bounded and analytic as a function of $k\in D_n$ away from a possible discrete set of singularities $\{k_j\}$ at which the Fredholm determinant vanishes. Moreover, $M_n$ admits a bounded and contious extension to $\bar D_n$ and
\be\label{Mnasy}
M_n(x,t,k)=\id+O(\frac{1}{k}),\qquad k\rightarrow \infty,\quad k\in D_n.
\ee
\end{proposition}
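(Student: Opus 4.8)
The plan is to establish the three claimed properties—well-definedness, analyticity with asymptotics, and continuous extension to the closure—by treating the defining integral equation (\ref{Mndef}) as a Fredholm integral equation of the second kind and analyzing it entrywise. First I would rewrite the system (\ref{Mndef}) componentwise as $(M_n)_{ij} = \dta_{ij} + \int_{\gam_{ij}^n} K_{ij}$, where the kernel $K_{ij}$ carries the exponential factor $(e^{(-ikx - 4ik^3t)\hat\Lam})_{ij}$ multiplied by the appropriate entry of $(V_1 dx + V_2 dt)M_n$. The whole construction of $\gam_{ij}^n$ in (\ref{gamijndef}) was arranged precisely so that on each contour the relevant exponential $e^{(l_i - l_j)(x-x') + (z_i - z_j)(t-t')}$ is bounded for $k \in D_n$: the sign conditions $\re l_i \lessgtr \re l_j$ and $\re z_i \lessgtr \re z_j$ are matched to the sign conditions $x - x' \gtrless 0$, $t - t' \gtrless 0$ on the four contours $\gam_j$. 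So the first step is to verify, entry by entry using the table (\ref{gamndef}), that each exponential appearing in the integrand decays (or is at worst bounded by $1$) along the corresponding contour, uniformly in $k \in \bar D_n$.

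Next, granting these uniform bounds, I would set up the integral operator on the Banach space of bounded continuous matrix functions on $\Om$ and show it is a contraction (or, more robustly, a compact operator to which Fredholm theory applies) for $(x,t)$ in the bounded domain $\Om$. Because $\Om$ is compact and $u$ is smooth, the potentials $V_1, V_2$ are bounded, and the finite length of the contours together with the exponential bounds gives a kernel with finite operator norm; a standard Neumann-series or Fredholm-alternative argument then yields a unique solution $M_n$, well-defined except where the Fredholm determinant vanishes, which is exactly the discrete exceptional set $\{k_j\}$. Analyticity of $M_n$ in $k \in D_n$ follows because the kernel depends analytically on $k$ in the open set $D_n$ (the exponentials and the rational dependence through $V_2$ are analytic there), so the resolvent—and hence $M_n$—is analytic wherever the Fredholm determinant is nonzero, by analytic Fredholm theory.

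For the continuous extension to $\bar D_n$, the point is that the exponential bounds established in the first step are uniform up to the boundary of $D_n$ (on $\pt D_n$ the relevant real parts coincide, so the exponentials are bounded by $1$ rather than strictly decaying), so the same fixed-point argument goes through on $\bar D_n$ and the solution depends continuously on $k$ there. Finally, for the large-$k$ asymptotics (\ref{Mnasy}), I would expand the Neumann series and integrate by parts in $x'$ or $t'$: each integration by parts produces a factor of $k^{-1}$ from the exponential, and the boundary terms reproduce the identity together with lower-order corrections, giving $M_n = \id + O(1/k)$ as $k \to \infty$ within $D_n$. The main obstacle I anticipate is the bookkeeping in the first step—verifying the uniform boundedness of every exponential on every entry's designated contour across all four regions $D_n$ and along their boundaries—since the $3 \times 3$ structure with the cubic dispersion $z_i(k) \sim k^3$ makes the contour assignment (\ref{gamijndef}) delicate, particularly in distinguishing $\gam_3$ from $\gam_4$ via the sign of the product of real-part differences; everything else is a routine adaptation of the standard Volterra/Fredholm machinery to the bounded interval.
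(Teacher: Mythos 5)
Your proposal is correct in substance, but it takes a genuinely different (and more self-contained) route than the paper, whose own proof contains almost no argument: for the well-definedness, boundedness, analyticity, and continuous-extension claims the paper simply cites Appendix B of \cite{l2012}, and for the asymptotics (\ref{Mnasy}) it substitutes a formal expansion $M=M_0+M^{(1)}/k+M^{(2)}/k^2+\cdots$ into the Lax pair (\ref{muLaxe}) and matches powers of $k$. What you sketch---the entrywise verification of the exponential bounds dictated by the contour assignment (\ref{gamijndef}) and the table (\ref{gamndef}), the Fredholm alternative plus analytic Fredholm theory on the compact domain $\Om$, and uniformity of the bounds up to $\pt D_n$---is essentially the content of the appendix the paper outsources to, so your version actually supplies the details the paper omits. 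Two caveats. First, drop the contraction option: the system (\ref{Mndef}) is genuinely Fredholm rather than Volterra, because different matrix entries are integrated from different corners of $\Om$, and a contraction (or convergent Neumann series) argument would produce a solution for \emph{every} $k$, contradicting the exceptional set $\{k_j\}$ that the proposition explicitly allows; the Fredholm alternative is not merely ``more robust,'' it is the only correct frame here. Second, and for the same reason, you cannot literally ``expand the Neumann series'' to get (\ref{Mnasy}); the integration-by-parts argument should instead be run on the resolvent/Fredholm representation of the solution, or on the integral equation satisfied by $M_n-\id$, whose forcing term is $O(1/k)$ after integration by parts. With that repair your route is rigorous, and it is arguably stronger than the paper's own treatment of (\ref{Mnasy}), since the formal substitution into the Lax pair determines the coefficients of an assumed expansion but does not by itself justify the $O(1/k)$ error bound.
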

\begin{proof}
The bounedness and analyticity properties are established in appendix B in \cite{l2012}. And substituting the expansion
\[
M=M_0+\frac{M^{(1)}}{k}+\frac{M^{(2)}}{k^2}+\cdots,\qquad k\rightarrow \infty.
\]
into the Lax pair (\ref{muLaxe}) and comparing the terms of the same order of $k$ yield the equation (\ref{Mnasy}).
\end{proof}

\subsection{The jump matrices}

We define spectral functions $S_n(k)$, $n=1,\ldots,4$, and
\be\label{Sndef}
S_n(k)=M_n(0,0,k),\qquad k\in D_n,\quad n=1,\ldots,4.
\ee
Let $M$ denote the sectionally analytic function on the complex $k-$plane which equals $M_n$ for $k\in D_n$. Then $M$ satisfies the jump conditions
\be\label{Mjump}
M_n=M_mJ_{m,n},\qquad k\in \bar D_n\cap \bar D_m,\qquad n,m=1,\ldots,4,\quad n\ne m,
\ee
where the jump matrices $J_{m,n}(x,t,k)$ are defined by
\be\label{Jmndef}
J_{m,n}=e^{(-i kx-4i k^3t)\hat \Lam}(S_m^{-1}S_n).
\ee

\subsection{The adjugated eigenfunctions}
We will also need the analyticity and boundedness properties of the minors of the matrices $\{\mu_j(x,t,k)\}_1^3$. We recall that the cofactor matrix $X^A$ of a $3\times 3$ matrix $X$ is defined by
\[
X^A=\left(
\ba{ccc}
m_{11}(X)&-m_{12}(X)&m_{13}(X)\\
-m_{21}(X)&m_{22}(X)&-m_{23}(X)\\
m_{31}(X)&-m_{32}(X)&m_{33}(X)
\ea
\right),
\]
where $m_{ij}(X)$ denote the $(ij)-$th minor of $X$.
\par
It follows from (\ref{muLaxe}) that the adjugated eigenfunction $\mu^A$ satisfies the Lax pair
\be\label{muadgLaxe}
\left\{
\ba{l}
\mu_x^A-[ik\Lam,\mu^A]=-V_1^T\mu^A,\\
\mu_t^A-[4ik^3\Lam,\mu^A]=-V_2^T\mu^A.
\ea
\right.
\ee
where $V^T$ denote the transform of a matrix $V$.
Thus, the eigenfunctions $\{\mu_j^A\}_1^4$ are solutions of the integral equations
\be\label{muadgdef}
\mu_j^A(x,t,k)=\id-\int_{\gam_j}e^{ik(x-x')+4ik^3(t-t')\hat \Lam}(V_1^Tdx+V_2^T)\mu^A,\quad j=1,2,3,4.
\ee
Then we can get the following analyticity and boundedness properties:
\be\label{mujadgbodanydom}
\ba{llll}
\mu_1^A:&(D_3,D_3,D_2);&
\mu_2^A:&(D_4,D_4,D_1);\\
\mu_3^A:&(D_2,D_2,D_3);&
\mu_3^A:&(D_1,D_1,D_4).
\ea
\ee

\subsection{The $J_{m,n}$'s computation}

Let us define the $3\times 3-$matrix value spectral functions $s(k)$, $S(k)$ and $S_L(k)$ by
\begin{subequations}\label{sSdef}
\be\label{mu3mu2s}
\mu_3(x,t,k)=\mu_2(x,t,k)e^{(-ikx-4ik^3t)\hat \Lam}s(k),
\ee
\be\label{mu1mu2S}
\mu_1(x,t,k)=\mu_2(x,t,k)e^{(-ikx-4ik^3t)\hat \Lam}S(k),
\ee
\be\label{mu4mu3SL}
\mu_4(x,t,k)=\mu_3(x,t,k)e^{(-ik(x-L)-4ik^3t)\hat \Lam}S_L(k)
\ee
\end{subequations}
Thus,
\begin{subequations}
\be\label{smu3}
s(k)=\mu_3(0,0,k),
\ee
\be\label{Smu1}
S(k)=\mu_1(0,0,k)=e^{4ik^3 T\hat \Lam}\mu_2^{-1}(0,T,k),
\ee
\be\label{SLmu4}
S_L(k)=\mu_4(L,0,k)=e^{4ik^3 T\hat \Lam}\mu_3^{-1}(L,T,k).
\ee
\end{subequations}
And we deduce from the properties of $\mu_j$ and $\mu_j^A$ that $\{s(k),S(k),S_L(k)\}$ and $\{s^A(k),S^A(k),S^A_L(k)\}$ have the following boundedness properties:
\[
\ba{llll}
s(k):&(D_3\cup D_4,D_3\cup D_4,D_1\cup D_2),&
S(k):&(D_2\cup D_4,D_2\cup D_4,D_1\cup D_3),\\
S_L(k):&(D_2\cup D_4,D_2\cup D_3,D_1\cup D_3)&
s^A(k):&(D_1\cup D_2,D_1\cup D_2,D_3\cup D_4),\\
S^A(k):&(D_1\cup D_3,D_1\cup D_3,D_2\cup D_4),&
S_L^A(k):&(D_1\cup D_3,D_1\cup D_3,D_2\cup D_4).
\ea
\]
Moreover, we notice that
\be\label{MnSnrel}
M_n(x,t,k)=\mu_2(x,t,k)e^{(-ikx-4ik^3t)\hat\Lam}S_n(k),\quad k\in D_n.
\ee
\begin{proposition}
The $S_n$ can be expressed in terms of the entries of $s(k),S(k)$ and $S_L(k)$ as follows:
\begin{subequations}\label{Sn}
\be
\footnotesize
\ba{ll}
S_1=\left(\ba{ccc}\frac{m_{22}(\mathcal{A})}{\mathcal{A}_{33}}&\frac{m_{21}(\mathcal{A})}{\mathcal{A}_{33}}&\mathcal{A}_{13}\\\frac{m_{12}(\mathcal{A})}{\mathcal{A}_{33}}&\frac{m_{11}(\mathcal{A})}{\mathcal{A}_{33}}
&\mathcal{A}_{23}\\0&0&\mathcal{A}_{33}\ea\right),&
S_2=\left(\ba{ccc}\frac{m_{22}(s)m_{33}(S)-m_{32}(s)m_{23}(S)}{(s^TS^A)_{33}}&\frac{m_{21}(s)m_{33}(S)-m_{31}(s)m_{23}(S)}{(s^TS^A)_{33}}&s_{13}\\
\frac{m_{12}(s)m_{33}(S)-m_{32}(s)m_{13}(S)}{(s^TS^A)_{33}}&\frac{m_{11}(s)m_{33}(S)-m_{31}(s)m_{13}(S)}{(s^TS^A)_{33}}&s_{23}\\
\frac{m_{12}(s)m_{23}(S)-m_{22}(s)m_{13}(S)}{(s^TS^A)_{33}}&\frac{m_{11}(s)m_{23}(S)-m_{21}(s)m_{13}(S)}{(s^TS^A)_{33}}&s_{33}\ea\right),\\
\ea
\ee
\be
\small
\ba{ll}
S_3=\left(\ba{ccc}s_{11}&s_{12}&\frac{S_{13}}{(S^Ts^A)_{33}}\\s_{21}&s_{22}&\frac{S_{23}}{(S^Ts^A)_{33}}\\s_{31}&s_{32}&\frac{S_{33}}{(S^Ts^A)_{33}}\ea\right),&
S_4=\left(\ba{ccc}\mathcal{A}_{11}&\mathcal{A}_{12}&0\\\mathcal{A}_{21}&\mathcal{A}_{22}&0\\\mathcal{A}_{31}&\mathcal{A}_{32}&\frac{1}{m_{33}(\mathcal{A})}\ea\right).
\ea
\ee
\end{subequations}
where $\mathcal{A}=(\mathcal{A}_{ij})_{i,j=1}^{3}=s(k)e^{ikL\hat \Lam}S_L(k)$.
\end{proposition}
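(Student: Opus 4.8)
The plan is to read off each $S_n(k)=M_n(0,0,k)$ by matching the contour-by-contour definition (\ref{Mndef})--(\ref{gamndef}) of $M_n$ against the four Volterra eigenfunctions $\mu_j$, each of which uses the single contour $\gam_j$. First I would turn the three relations (\ref{sSdef}) into genuine $(x,t)$-independent scattering relations between the fundamental solutions $\Psi_j=\mu_j e^{-(ikx+4ik^3t)\Lam}$. In this language (\ref{mu3mu2s}) and (\ref{mu1mu2S}) read $\Psi_3=\Psi_2 s$ and $\Psi_1=\Psi_2 S$, while composing (\ref{mu4mu3SL}) with (\ref{mu3mu2s}) gives $\Psi_4=\Psi_2\mathcal{A}$ with $\mathcal{A}=s\,e^{ikL\hat\Lam}S_L$, which is precisely the matrix in the statement. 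Returning to the $\mu$'s this says $\mu_\ell=\mu_2 e^{(-ikx-4ik^3t)\hat\Lam}C_{2\ell}$ with $C_{22}=\id$, $C_{23}=s$, $C_{21}=S$, $C_{24}=\mathcal{A}$, so by (\ref{MnSnrel}) we obtain the family of representations
\be
M_n=\mu_\ell\, e^{(-ikx-4ik^3t)\hat\Lam}\big(C_{2\ell}^{-1}S_n\big),\qquad \ell=1,2,3,4,
\ee
one for each eigenfunction; their content is analytic, not algebraic, since at $(0,0)$ every such representation is a tautology.

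The key observation I would then exploit is that the $(i,j)$ entries of $M_n$ and of $\mu_\ell$ satisfy the \emph{same} scalar integral equation whenever $\gam^n_{ij}=\gam_\ell$, because the integrand of (\ref{Mndef}) carries the entrywise weight $e^{(ik(x'-x)+4ik^3(t'-t))(\Lam_{ii}-\Lam_{jj})}$ and the column-$j$ system only couples entries within column $j$. Hence, if an entire column $j$ of the matrix $\gam^n$ in (\ref{gamndef}) is constant, equal to $\gam_\ell$, then that column of $M_n$ coincides with the $j$-th column of $\mu_\ell$, and evaluation at the origin gives $\mathrm{col}_j S_n=\mathrm{col}_j C_{2\ell}$ with no denominator. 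The correct $\ell$ is the one whose $j$-th column is bounded in $D_n$ according to (\ref{mujbodanydom}); using that boundedness to justify the identification, this step already produces the denominator-free columns of (\ref{Sn}): the third column of $S_1$ (from $\mathcal{A}$), the third column of $S_2$ and the first two of $S_3$ (from $s$), and the first two columns of $S_4$ (from $\mathcal{A}$).

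For the columns of $\gam^n$ that are not constant the identification above fails, and this is where the real work lies. In every such column the rows $1,2$ and the row $3$ are assigned to two different contours $\gam_{\ell_1},\gam_{\ell_2}$, so that column of $M_n$ is a true mixture of $\mu_{\ell_1}$ and $\mu_{\ell_2}$. I would resolve it by imposing the two representations $M_n=\mu_{\ell_1}e^{\cdots\hat\Lam}(C_{2\ell_1}^{-1}S_n)$ and $M_n=\mu_{\ell_2}e^{\cdots\hat\Lam}(C_{2\ell_2}^{-1}S_n)$ simultaneously and demanding, via (\ref{mujbodanydom}), that only the $D_n$-bounded columns of each $\mu_{\ell}$ survive; this forces the offending components of $C_{2\ell}^{-1}S_n$ to vanish and leaves a small linear system in the remaining entries. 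Solving it yields exactly the cofactor numerators and the scalar denominators of (\ref{Sn}): the two contours fix the two scattering matrices whose minors appear, and when one of them is $\gam_2$ (so $C=\id$) only the single nontrivial matrix remains, giving the clean $1/\mathcal{A}_{33}$ and $1/m_{33}(\mathcal{A})$ of $S_1$ and $S_4$, whereas the genuine $s$--$S$ mixing in $\gam^2,\gam^3$ produces the products of minors over $(s^TS^A)_{33}$ and $(S^Ts^A)_{33}$ in $S_2$ and $S_3$; the entries sitting on $\gam_2$ collapse to $\dta_{ij}$ and account for the zeros. The boundedness and analyticity of all these minors, needed to certify that the formulas are the correct sectionally analytic functions, is supplied by the adjugated eigenfunction properties (\ref{mujadgbodanydom}) together with the stated domains of $s^A,S^A,S_L^A$.

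The step I expect to be the main obstacle is precisely the pinning-down of the scalar denominators in the mixed columns. Boundedness of $M_n$ alone fixes each such column only up to a scalar function of $k$ that tends to $1$ at infinity, so the normalization (\ref{Mnasy}) is not enough; I anticipate having to track the cofactor matrix $M_n^A$ in tandem with $M_n$ — equivalently, to use $\mu_\ell^A$ and the boundedness in (\ref{mujadgbodanydom}) — in order to identify the scalar as the specific third-row cofactor combination rather than an arbitrary multiplier, and then to check that the conditions coming from $M_n$ and from $M_n^A$ are mutually consistent for all four values of $n$. This double bookkeeping, heavier here than in the $3\times3$ half-line or the $2\times2$ settings because of the extra interval matrix $\mathcal{A}$ and the higher-order weights, is the delicate part of the argument.
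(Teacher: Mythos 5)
Your first step is sound: when an entire column of the contour matrix $\gam^n$ in (\ref{gamndef}) is constant, equal to some $\gam_\ell$, the column-wise Volterra systems for $M_n$ and $\mu_\ell$ coincide, so uniqueness gives that column of $S_n$ as the corresponding column of $\mu_\ell(0,0,k)$; this recovers all the denominator-free columns of (\ref{Sn}), and it is consistent with what the paper does. The gap is in the mixed columns, and it is exactly where you placed your doubts. The mechanism you propose there --- that boundedness via (\ref{mujbodanydom}) ``forces the offending components of $C_{2\ell}^{-1}S_n$ to vanish'' --- cannot work even in principle. Take $n=1$, column $1$, where the claimed formula requires $(S_1)_{31}=0$. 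From (\ref{MnSnrel}), $[M_1]_1=[\mu_2]_1(S_1)_{11}+[\mu_2]_2(S_1)_{21}+[\mu_2]_3e^{\tha_{31}}(S_1)_{31}$, and for $k\in D_1$ one has $\re l_3<\re l_1$, $\re z_3<\re z_1$, so $e^{\tha_{31}}$ is exponentially \emph{decaying} for $x,t>0$: a nonzero $(S_1)_{31}$ produces no unbounded term at all, and boundedness is simply silent. The same objection applies to every entry you need to kill, and no amount of adjugate bookkeeping converts a boundedness (inequality-type) condition into the exact identities the proposition asserts; your closing paragraph concedes the problem but the proposed fix does not close it.

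The ingredient you are missing is that the paper never argues by boundedness here: the proof is purely algebraic and rests on evaluating the integral equations (\ref{Mndef}) at \emph{all four corners} of the rectangle $[0,L]\times[0,T]$, not only at the origin. Besides $S_n=M_n(0,0,k)$ one defines $R_n$, $T_n$, $Q_n$ by (\ref{RnTnQn}); since each contour $\gam_\ell$ begins at one of the four corners, the integral term in (\ref{Mndef}) vanishes identically at that corner for every entry assigned to $\gam_\ell$, giving the exact entrywise conditions $(S_n)_{ij}=0$ if $\gam^n_{ij}=\gam_2$, $(R_n)_{ij}=0$ if $\gam^n_{ij}=\gam_1$, $(T_n)_{ij}=\dta_{ij}$ if $\gam^n_{ij}=\gam_3$, $(Q_n)_{ij}=\dta_{ij}$ if $\gam^n_{ij}=\gam_4$ (diagonal entries are never assigned to $\gam_1$ or $\gam_2$, so the vanishing statements concern only off-diagonal entries). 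Combined with the factorization relations $s=S_nT_n^{-1}$, $S=S_nR_n^{-1}$, $\mathcal{A}=S_nQ_n^{-1}$ of (\ref{sSRnSnTn}), this yields a closed system of $27$ scalar equations in $27$ unknowns for each $n$, solved by linear algebra. For instance, for $n=1$ the corner conditions say $Q_1$ is the identity except in its $(3,1)$ and $(3,2)$ entries and that $(S_1)_{31}=(S_1)_{32}=0$; writing $S_1=\mathcal{A}Q_1$ and eliminating the two unknown entries of $Q_1$ gives $(S_1)_{11}=m_{22}(\mathcal{A})/\mathcal{A}_{33}$ and the rest of the stated matrix at once. With this corner-evaluation idea the mixed columns require no analyticity input whatsoever, so the delicate adjugate analysis you anticipated is unnecessary for this proposition.
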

\begin{proof}
Firstly, we define $R_n(k),T_n(k)$ and $Q_n(k)$ as follows:
\begin{subequations}\label{RnTnQn}
\be\label{Rn}
R_n(k)=e^{4ik^3T\hat \Lam}M_n(0,T,k),
\ee
\be\label{Tn}
T_n(k)=e^{ikL\hat \Lam}M_n(L,0,k),
\ee
\be\label{Qn}
Q_n(k)=e^{(ikL+4ik^3T)\hat\Lam}M_n(L,T,k).
\ee
\end{subequations}
Then, we have the following relations:
\be\label{MnRnSnTn}
\left\{
\ba{l}
M_n(x,t,k)=\mu_1(x,t,k)e^{(-ikx-4ik^3t)\hat\Lam}R_n(k),\\
M_n(x,t,k)=\mu_2(x,t,k)e^{(-ikx-4ik^3t)\hat\Lam}S_n(k),\\
M_n(x,t,k)=\mu_3(x,t,k)e^{(-ikx-4ik^3t)\hat\Lam}T_n(k),\\
M_n(x,t,k)=\mu_3(x,t,k)e^{(-ikx-4ik^3t)\hat\Lam}Q_n(k)
\ea
\right.
\ee

The relations (\ref{MnRnSnTn}) imply that
\be\label{sSRnSnTn}
\ba{lll}
s(k)=S_n(k)T^{-1}_n(k),&
S(k)=S_n(k)R^{-1}_n(k),&
\mathcal{A}(k)=S_n(k)Q^{-1}_n(k).
\ea
\ee
These equations constitute a matrix factorization problem which, given $\{s(k),S(k),S_L(k)\}$ can be solved for the $\{R_n,S_n,T_n,Q_n\}$. Indeed, the integral equations (\ref{Mndef}) together with the definitions of $\{R_n,S_n,T_n,Q_n\}$ imply that
\be
\left\{
\ba{llllll}
(R_n(k))_{ij}=0&if&\gam_{ij}^n=\gam_1;&
(S_n(k))_{ij}=0&if&\gam_{ij}^n=\gam_2;\\
(T_n(k))_{ij}=\dta_{ij}&if&\gam_{ij}^n=\gam_3;&
(Q_n(k))_{ij}=\dta_{ij}&if&\gam_{ij}^n=\gam_4.
\ea
\right.
\ee
It follows that (\ref{sSRnSnTn}) are 27 scalar equations for 27 unknowns. By computing the explicit solution of this algebraic system, we arrive at (\ref{Sn}).
\end{proof}

\subsection{The global relation}
The spectral functions $S(k),S_L(k)$ and $s(k)$ are not independent but satisfy an important relation. Indeed, it follows from (\ref{sSdef}) that
\be
\mu_1(x,t,k)e^{(-ikx-4ik^3t)\hat\Lam}\{S^{-1}(k)s(k)e^{ikL\hat\Lam}S_L(k)\}=\mu_4(x,t,k).
\ee
Since $\mu_1(0,T,k)=\id$, evaluation at $(0,T)$ yields the following global relation:
\be\label{globalrel}
S^{-1}(k)s(k)e^{ikL\hat\Lam}S_L(k)=e^{4ik^3T\hat \Lam}c(T,k),
\ee
where $c(T,k)=\mu_4(0,T,k)$.

\subsection{The residue conditions}
Since $\mu_2$ is an entire function, it follows from (\ref{MnSnrel}) that M can only have singularities at the points where the $S_n's$ have singularities. We infer from the explicit formulas (\ref{Sn}) that the possible singularities of $M$ are as follows:
\begin{itemize}
\item $[M]_1$ could have poles in $D_1$ at the zeros of $\mathcal{A}_{33}(k)$;
\item $[M]_1$ could have poles in $D_2$ at the zeros of $(s^TS^A)_{33}(k)$;
\item $[M]_2$ could have poles in $D_1$ at the zeros of $\mathcal{A}_{33}(k)$;
\item $[M]_2$ could have poles in $D_2$ at the zeros of $(s^TS^A)_{33}(k)$;
\item $[M]_3$ could have poles in $D_3$ at the zeros of $(S^Ts^A)_{33}(k)$;
\item $[M]_3$ could have poles in $D_4$ at the zeros of $m_{33}(\mathcal{A})(k)$;
\end{itemize}
We denote the above possible zeros by $\{k_j\}_1^N$ and assume they satisfy the following assumption.
\begin{assumption}\label{zeroassump}
We assume that
\begin{itemize}
\item $\mathcal{A}_{33}(k)$ has $n_0$ possible simple zeros in $D_1$ denoted by $\{k_j\}_1^{n_0}$;
\item $(s^TS^A)_{33}(k)$ has $n_1-n_0$ possible simple zeros in $D_2$ denoted by $\{k_j\}_{n_0+1}^{n_1}$;
\item $(S^Ts^A)_{33}(k)$ has $n_2-n_1$ possible simple zeros in $D_3$ denoted by $\{k_j\}_{n_1+1}^{n_2}$;
\item $m_{33}(\mathcal{A})(k)$ has $N-n_2$ possible simple zeros in $D_4$ denoted by $\{k_j\}_{n_2+1}^{N}$;
\end{itemize}
and that none of these zeros coincide. Moreover, we assume that none of these functions have zeros on the boundaries of the $D_n$'s.
\end{assumption}
We determine the residue conditions at these zeros in the following:
\begin{proposition}\label{propos}
Let $\{M_n\}_1^4$ be the eigenfunctions defined by (\ref{Mndef}) and assume that the set $\{k_j\}_1^N$ of singularitues are as the above assumption. Then the following residue conditions hold:
\begin{subequations}
\be\label{M11D1res}
{Res}_{k=k_j}[M]_1=\frac{m_{12}(\mathcal{A})(k_j)}{\dot{\mathcal{A}}_{33}(k_j)\mathcal{A}_{23}(k_j)}e^{\tha_{31}(k_j)}[M(k_j)]_3,\quad 1\le j\le n_0,k_j\in D_1
\ee
\be\label{M12D1res}
{Res}_{k=k_j}[M]_2=\frac{m_{12}(\mathcal{A})(k_j)}{\dot{\mathcal{A}}_{33}(k_j)\mathcal{A}_{13}(k_j)}e^{\tha_{32}(k_j)}[M(k_j)]_3,\quad 1\le j\le n_0,k_j\in D_1
\ee
\be\label{M21D2res}
\ba{r}
Res_{k=k_j}[M]_1=\frac{m_{12}(s)(k_j)m_{33}(S)(k_j)-m_{32}(s)(k_j)m_{13}(S)(k_j)}{\dot{(s^TS^A)_{33}(k_j)}s_{23}(k_j)}e^{\tha_{31}(k_j)}[M(k_j)]_3\\
\quad n_0+1\le j\le n_1,k_j\in D_2,
\ea
\ee
\be\label{M22D2res}
\ba{r}
Res_{k=k_j}[M]_2=\frac{m_{21}(s)(k_j)m_{33}(S)(k_j)-m_{31}(s)(k_j)m_{23}(S)(k_j)}{\dot{(s^TS^A)_{33}(k_j)}s_{13}(k_j)}e^{\tha_{32}(k_j)}[M(k_j)]_3\\
\quad n_0+1\le j\le n_1,k_j\in D_2,
\ea
\ee
\be\label{M43D4res}
\ba{rl}
Res_{k=k_j}[M]_3=&\frac{S_{13}(k_j)s_{32}(k_j)-S_{33}(k_j)s_{12}(k_j)}{\dot{(S^Ts^A)_{33}(k_j)}m_{23}(s)(k_j)}e^{\tha_{13}(k_j)}[M(k_j)]_1\\
&+\frac{S_{33}(k_j)s_{11}(k_j)-S_{13}(k_j)s_{31}(k_j)}{\dot{(S^Ts^A)_{33}(k_j)}m_{23}(s)(k_j)}e^{\tha_{23}(k_j)}[M(k_j)]_2,n_1+1\le j\le n_2,k_j\in D_3,
\ea
\ee
\be\label{M63D6res}
\ba{r}
Res_{k=k_j}[M]_3=\frac{\mathcal{A}_{12}(k_j)}{\dot m_{33}(\mathcal{A})(k_j)m_{23}(\mathcal{A})(k_j)}e^{\tha_{13}(k_j)}[M(k_j)]_1-\frac{\mathcal{A}_{11}(k_j)}{\dot m_{33}(\mathcal{A})(k_j)m_{23}(\mathcal{A})(k_j)}e^{\tha_{23}(k_j)}[M(k_j)]_2\\
\quad n_2+1\le j\le N,k_j\in D_4.
\ea
\ee
\end{subequations}
where $\dot f=\frac{df}{dk}$, and $\tha_{ij}$ is defined by
\be\label{thaijdef}
\tha_{ij}(x,t,k)=(l_i-l_j)x+(z_i-z_j)t,\quad i,j=1,2,3.
\ee
that implies that
\[
\tha_{ij}=0,i,j=1,2;\quad \tha_{13}=\tha_{23}=-\tha_{32}=-\tha_{31}=-2ikx-8ik^3t.
\]
\end{proposition}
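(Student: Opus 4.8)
The plan is to read off the pole structure directly from the representation (\ref{MnSnrel}), $M_n=\mu_2 e^{(-ikx-4ik^3t)\hat\Lam}S_n$, together with the explicit formulas (\ref{Sn}) for $S_n$. Since $\mu_2$ is entire in $k$, every singularity of $M=M_n$ on $D_n$ must come from a vanishing denominator of $S_n$, and Assumption \ref{zeroassump} lists exactly where these occur. Writing the action of $e^{(-ikx-4ik^3t)\hat\Lam}$ entrywise as multiplication by $e^{\tha_{ij}}$, so that $[M_n]_p=\sum_i e^{\tha_{ip}}(S_n)_{ip}[\mu_2]_i$, I would first express each column of $M_n$ as an explicit linear combination of the columns $[\mu_2]_1,[\mu_2]_2,[\mu_2]_3$, with coefficients given by the entries of $S_n$ and the exponentials, recalling $\tha_{ij}=0$ for $i,j\in\{1,2\}$ and $\tha_{13}=\tha_{23}=-\tha_{31}=-\tha_{32}$.

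For $k\in D_1$ the first two columns of $S_1$ carry the factor $1/\mathcal{A}_{33}$ while the third is regular; I would compute the residue of $[M]_1$ and $[M]_2$ at a simple zero $k_j$ of $\mathcal{A}_{33}$ as (numerator)$/\dot{\mathcal{A}}_{33}(k_j)$. The key simplification is that at such a zero the relevant cofactors collapse, e.g. $m_{22}(\mathcal{A})=-\mathcal{A}_{13}\mathcal{A}_{31}$ and $m_{12}(\mathcal{A})=-\mathcal{A}_{23}\mathcal{A}_{31}$, so that both residues reduce to a multiple of the single combination $\mathcal{A}_{13}[\mu_2]_1+\mathcal{A}_{23}[\mu_2]_2$. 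Evaluating the regular third column at $k_j$, where the $\mathcal{A}_{33}[\mu_2]_3$ term drops out, gives exactly $[M(k_j)]_3=e^{\tha_{13}(k_j)}(\mathcal{A}_{13}[\mu_2]_1+\mathcal{A}_{23}[\mu_2]_2)$, which I substitute to obtain (\ref{M11D1res})--(\ref{M12D1res}) with the $e^{\tha_{31}}$ factor. The case $k\in D_2$, at the zeros of $(s^{T}S^{A})_{33}$, is identical after replacing the cofactors of $\mathcal{A}$ by the combinations of minors of $s$ and $S$ appearing in $S_2$.

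The cases $k\in D_3$ and $k\in D_4$ are where the real work lies, and I expect them to be the main obstacle. Here the singular column is the third one (through $1/(S^{T}s^{A})_{33}$, resp. $1/m_{33}(\mathcal{A})$), so the residue of $[M]_3$ must be written in terms of the two regular columns $[M]_1,[M]_2$, yet these give only two relations for the three unknown columns $[\mu_2]_1,[\mu_2]_2,[\mu_2]_3$. In $D_4$ I would supply the missing relation from $m_{33}(\mathcal{A})(k_j)=\mathcal{A}_{11}\mathcal{A}_{22}-\mathcal{A}_{12}\mathcal{A}_{21}=0$: the combination $\mathcal{A}_{12}[M]_1-\mathcal{A}_{11}[M]_2$ annihilates the $[\mu_2]_1,[\mu_2]_2$ contributions and leaves $[\mu_2]_3$ with coefficient $-e^{\tha_{31}}m_{23}(\mathcal{A})$, which I solve and divide by $\dot m_{33}(\mathcal{A})(k_j)$ to reach (\ref{M63D6res}). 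In $D_3$ the cleanest route is not to isolate $[\mu_2]_3$ but to substitute the regular relations for $[M]_1,[M]_2$ into the proposed right-hand side of (\ref{M43D4res}) and verify the identity: the coefficients of $[\mu_2]_1$ and $[\mu_2]_3$ reproduce $e^{\tha_{13}}S_{13}$ and $S_{33}$ after factoring out $m_{23}(s)$, while the coefficient of $[\mu_2]_2$ matches $e^{\tha_{23}}S_{23}$ only after invoking the zero condition $(S^{T}s^{A})_{33}(k_j)=S_{13}m_{13}(s)-S_{23}m_{23}(s)+S_{33}m_{33}(s)=0$. This last substitution, and the parallel cofactor bookkeeping in the other cases, are the genuinely delicate points.

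Throughout I would invoke the earlier proposition on boundedness, analyticity and the normalization (\ref{Mnasy}) to guarantee that each $M_n$ extends continuously to $\bar D_n$ and that its only singularities are the listed simple poles, so that the residue computations are legitimate. The exponentials are handled solely through $\tha_{13}=\tha_{23}$ and $\tha_{31}=\tha_{32}=-\tha_{13}$, which is what makes every coefficient combination in (\ref{M11D1res})--(\ref{M63D6res}) come out with a single exponential factor; the reduction from three $\mu_2$-columns to two $M$-columns in $D_3,D_4$ is the only step requiring real care, everything else being the direct column-by-column expansion of (\ref{MnSnrel}).
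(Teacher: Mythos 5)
Your strategy is the paper's strategy: expand the columns of $M_n=\mu_2e^{(-ikx-4ik^3t)\hat\Lam}S_n$ via the explicit formulas (\ref{Sn}) and do linear algebra at the zeros listed in Assumption \ref{zeroassump}. The only organizational difference is that the paper solves one regular column relation for a column of $\mu_2$ and substitutes \emph{before} taking residues, whereas you take the residue first and then identify the surviving $\mu_2$-combination with a regular column of $M$ evaluated at $k_j$ (in $D_1,D_2$), respectively verify the claimed identity (in $D_3$); these are the same computation performed in a different order. Your $D_3$ verification is correct, including the genuinely delicate point that the $[\mu_2]_2$-coefficient matches only because $(S^Ts^A)_{33}=S_{13}m_{13}(s)-S_{23}m_{23}(s)+S_{33}m_{33}(s)$ vanishes at $k_j$ --- which is exactly what the paper's substitution step uses implicitly.

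However, in two places you assert that your algebra lands on the printed formulas when it does not, and this cannot be glossed over. In $D_4$, your (correct) relation $\mathcal{A}_{12}[M]_1-\mathcal{A}_{11}[M]_2=-e^{\tha_{31}}m_{23}(\mathcal{A})[\mu_2]_3$ at $k_j$ yields
\[
\mathrm{Res}_{k=k_j}[M]_3=\frac{[\mu_2]_3(k_j)}{\dot m_{33}(\mathcal{A})(k_j)}
=\frac{\mathcal{A}_{11}(k_j)e^{\tha_{23}(k_j)}[M(k_j)]_2-\mathcal{A}_{12}(k_j)e^{\tha_{13}(k_j)}[M(k_j)]_1}{\dot m_{33}(\mathcal{A})(k_j)\,m_{23}(\mathcal{A})(k_j)},
\]
which is the \emph{negative} of (\ref{M63D6res}). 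Likewise, in $D_1$ your cofactor collapse gives $\mathrm{Res}_{k=k_j}[M]_2=\frac{m_{21}(\mathcal{A})(k_j)}{\dot{\mathcal{A}}_{33}(k_j)\mathcal{A}_{13}(k_j)}e^{\tha_{32}(k_j)}[M(k_j)]_3$ (note $m_{21}$; equivalently $m_{11}(\mathcal{A})/\mathcal{A}_{23}$), not (\ref{M12D1res}) as printed, which has $m_{12}(\mathcal{A})/\mathcal{A}_{13}$; at a zero of $\mathcal{A}_{33}$ these differ, since $m_{21}(\mathcal{A})=-\mathcal{A}_{13}\mathcal{A}_{32}$ while $m_{12}(\mathcal{A})=-\mathcal{A}_{23}\mathcal{A}_{31}$ there. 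In both instances your computation is the correct one and the printed formula is at fault: the paper's own proof of the $D_4$ case rests on the elimination identity
\[
[M_4]_3=\frac{\mathcal{A}_{12}}{m_{33}(\mathcal{A})m_{23}(\mathcal{A})}e^{\tha_{13}}[M_4]_1-\frac{\mathcal{A}_{11}}{m_{33}(\mathcal{A})m_{23}(\mathcal{A})}e^{\tha_{13}}[M_4]_2-\frac{1}{m_{23}(\mathcal{A})}e^{\tha_{23}}[\mu_2]_2,
\]
whose first two signs fail a direct substitution test: take $\mathcal{A}_{11}=\mathcal{A}_{22}=\mathcal{A}_{31}=\mathcal{A}_{32}=\mathcal{A}_{33}=1$ and all other entries $0$ in (\ref{M61})--(\ref{M63}); then the left side is $[\mu_2]_3$ but the right side equals $-[\mu_2]_3-2e^{\tha_{13}}[\mu_2]_2$, whereas Cramer's rule (your signs) reproduces $[\mu_2]_3$ exactly. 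So to complete your proof honestly you must record the two corrections --- replace $m_{12}(\mathcal{A})$ by $m_{21}(\mathcal{A})$ in (\ref{M12D1res}) and flip the overall sign of (\ref{M63D6res}) --- rather than claim to reach the statements as printed; as written, your final claims contradict your own (correct) intermediate steps.
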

\begin{proof}
We will prove (\ref{M11D1res}), (\ref{M21D2res}), (\ref{M43D4res}), (\ref{M63D6res}), the other conditions follow by similar arguments.
Equation (\ref{MnSnrel}) implies the relation
\begin{subequations}
\be\label{M1S1}
M_1=\mu_2e^{(-ikx-4ik^3t)\hat\Lam}S_1,
\ee
\be\label{M2S2}
M_2=\mu_2e^{(-ikx-4ik^3t)\hat\Lam}S_2.
\ee
\be\label{M4S4}
M_3=\mu_2e^{(-ikx-4ik^3t)\hat\Lam}S_3,
\ee
\be\label{M6S6}
M_4=\mu_2e^{(-ikx-4ik^3t)\hat\Lam}S_4,
\ee
\end{subequations}
In view of the expressions for $S_1$ and $S_2$ given in (\ref{Sn}), the three columns of (\ref{M1S1}) read:
\begin{subequations}
\be\label{M11}
[M_1]_1=[\mu_2]_1\frac{m_{22}(\mathcal{A})}{\mathcal{A}_{33}}+[\mu_2]_2e^{\tha_{21}}\frac{m_{12}(\mathcal{A})}{\mathcal{A}_{33}},
\ee
\be\label{M12}
[M_1]_2=[\mu_2]_1e^{\tha_{12}}\frac{m_{21}(\mathcal{A})}{\mathcal{A}_{33}}+[\mu_2]_2\frac{m_{11}(\mathcal{A})}{\mathcal{A}_{33}},
\ee
\be\label{M13}
[M_1]_3=[\mu_2]_1e^{\tha_{13}}\mathcal{A}_{13}+[\mu_2]_2e^{\tha_{23}}\mathcal{A}_{23}+[\mu_2]_3\mathcal{A}_{33}.
\ee
\end{subequations}
while the three columns of (\ref{M2S2}) read:
\begin{subequations}
\be\label{M21}
\ba{rl}
[M_2]_1&=[\mu_2]_1\frac{m_{22}(s)m_{33}(S)-m_{32}(s)m_{23}(S)}{(s^TS^A)_{33}}\\
&+[\mu_2]_2\frac{m_{12}(s)m_{33}(S)-m_{32}(s)m_{13}(S)}{(s^TS^A)_{33}}e^{\tha_{21}}\\
&+[\mu_2]_3\frac{m_{12}(s)m_{23}(S)-m_{22}(s)m_{13}(S)}{(s^TS^A)_{33}}e^{\tha_{31}}
\ea
\ee
\be\label{M22}
\ba{rl}
[M_2]_2&=[\mu_2]_1\frac{m_{21}(s)m_{33}(S)-m_{31}(s)m_{23}(S)}{(s^TS^A)_{33}}e^{\tha_{12}}\\
&+[\mu_2]_2\frac{m_{11}(s)m_{33}(S)-m_{31}(s)m_{13}(S)}{(s^TS^A)_{33}}\\
&+[\mu_2]_3\frac{m_{11}(s)m_{23}(S)-m_{21}(s)m_{13}(S)}{(s^TS^A)_{33}}e^{\tha_{32}}
\ea
\ee
\be\label{M23}
[M_2]_3=[\mu_2]_1s_{13}e^{\tha_{13}}+[\mu_2]_2s_{23}e^{\tha_{23}}+[\mu_2]_3s_{33}.
\ee
\end{subequations}
and the three columns of (\ref{M4S4}) read:
\begin{subequations}
\be\label{M41}
[M_3]_1=[\mu_2]_1s_{11}+[\mu_2]_2s_{21}e^{\tha_{21}}+[\mu_2]_3s_{31}e^{\tha_{31}},
\ee
\be\label{M42}
[M_3]_2=[\mu_2]_1s_{12}e^{\tha_{12}}+[\mu_2]_2s_{22}+[\mu_2]_3s_{32}e^{\tha_{32}},
\ee
\be\label{M43}
[M_3]_3=[\mu_2]_1\frac{S_{13}}{(S^Ts^A)_{33}}e^{\tha_{13}}+[\mu_2]_2\frac{S_{23}}{(S^Ts^A)_{33}}e^{\tha_{23}}+[\mu_2]_3\frac{S_{33}}{(S^Ts^A)_{33}}.
\ee
\end{subequations}
the three columns of (\ref{M6S6}) read:
\begin{subequations}
\be\label{M61}
[M_4]_1=[\mu_2]_1\mathcal{A}_{11}+[\mu_2]_2\mathcal{A}_{21}e^{\tha_{21}}+[\mu_2]_3\mathcal{A}_{31}e^{\tha_{31}},
\ee
\be\label{M62}
[M_4]_2=[\mu_2]_1\mathcal{A}_{12}e^{\tha_{12}}+[\mu_2]_2\mathcal{A}_{22}+[\mu_2]_3\mathcal{A}_{32}e^{\tha_{32}},
\ee
\be\label{M63}
[M_4]_3=[\mu_2]_3\frac{1}{m_{33}(\mathcal{A})}.
\ee
\end{subequations}
We first suppose that $k_j\in D_1$ is a simple zero of $s_{33}(k)$. Solving (\ref{M13}) for $[\mu_2]_2$ and substituting the result in to (\ref{M11}), we find
\[
[M_1]_1=\frac{m_{12}(\mathcal{A})}{\mathcal{A}_{33}\mathcal{A}_{23}}e^{\tha_{31}}[M_1]_3+\frac{m_{32}(\mathcal{A})}{\mathcal{A}_{23}}[\mu_2]_2-\frac{m_{12}(\mathcal{A})}{\mathcal{A}_{23}}e^{\tha_{31}}[\mu_2]_3.
\]
Taking the residue of this equation at $k_j$, we find the condition (\ref{M11D1res}) in the case when $k_j\in D_1$. Similarly, Solving (\ref{M23}) for $[\mu_2]_2$ and substituting the result in to (\ref{M21}), we find
\[
[M_2]_1=\frac{m_{12}(s)m_{33}(S)-m_{32}(s)m_{13}(S)}{(s^TS^A)_{33}s_{23}}e^{\tha_{31}}[M_1]_3-\frac{m_{32}(s)}{s_{23}}[\mu_2]_1-\frac{m_{12}(s)}{s_{23}}e^{\tha_{31}}[\mu_2]_3.
\]
Taking the residue of this equation at $k_j$, we find the condition (\ref{M21D2res}) in the case when $k_j\in D_2$.
\par
In order to prove (\ref{M43D4res}), we solve (\ref{M41}) and (\ref{M42}) for $[\mu_2]_1$ and $[\mu_2]_3$, then substituting the result into (\ref{M43}), we find
\[
[M_3]_3=\frac{S_{13}s_{32}-S_{33}s_{12}}{(S^Ts^A)_{33}m_{23}(s)}e^{\tha_{13}}[M_3]_1+\frac{S_{33}s_{11}-S_{13}s_{31}}{(S^Ts^A)_{33}m_{23}(s)}e^{\tha_{23}}[M_3]_2+\frac{1}{m_{23}(s)}[\mu_2]_3.
\]
Taking the residue of this equation at $k_j$, we find the condition (\ref{M43D4res}) in the case when $k_j\in D_3$. Similarly, solving (\ref{M61}) and (\ref{M62}) for $[\mu_2]_1$ and $[\mu_2]_3$, then substituting the result into (\ref{M63}), we find
\[
[M_4]_3=\frac{\mathcal{A}_{12}}{m_{33}(\mathcal{A})m_{23}(\mathcal{A})}e^{\tha_{13}}[M_4]_1-\frac{\mathcal{A}_{11}}{m_{33}(\mathcal{A})m_{23}(\mathcal{A})}e^{\tha_{13}}[M_4]_2-\frac{1}{m_{23}(\mathcal{A})}e^{\tha_{23}}[\mu_2]_2.
\]
Taking the residue of this equation at $k_j$, we find the condition (\ref{M63D6res}) in the case when $k_j\in D_4$.
\end{proof}

\section{The Riemann-Hilbert problem}

The sectionally analytic function $M(x,t,k)$ defined in section 2 satisfies a Riemann-Hilbert problem which can be formulated in terms of the initial and boundary values of $u(x,t)$. By solving this Riemann-Hilbert problem, the solution of (\ref{KdV-typee})(then (\ref{SSe})) can be recovered for all values of $x,t$.
\begin{theorem}
Suppose that $u(x,t)$ is a solution of (\ref{KdV-typee}) in the interval domain $\Om$ with sufficient smoothness. Then $u(x,t)$ can be reconstructed from the initial value $\{u_0(x)\}$ and boundary values $\{g_0(t),g_1(t),g_2(t)\}$, $\{f_0(t),f_1(t),f_2(t)\}$ defined as (\ref{IBV}).
\par
Use the initial and boundary data to define the jump matrices $J_{m,n}(x,t,k)$ as well as the spectral $s(k)$ and $S(k)$ by equation (\ref{sSdef}). Assume that the possible zeros $\{k_j\}_1^N$ of the functions $\mathcal{A}_{33}(k),(s^TS^A)_{33}(k),(S^Ts^A)_{33}(k)$ and $m_{33}(\mathcal{A})(k)$ are as in assumption \ref{zeroassump}.
\par
Then the solution $\{u(x,t)\}$ is given by
\be\label{usolRHP}
u(x,t)=2i\lim_{k\rightarrow \infty}(kM(x,t,k))_{13}.
\ee
where $M(x,t,k)$ satisfies the following $3\times 3$ matrix Riemann-Hilbert problem:
\begin{itemize}
\item $M$ is sectionally meromorphic on the Riemann $k-$sphere with jumps across the contours $\bar D_n\cap \bar D_m,n,m=1,\cdots, 4$, see Figure \ref{fig-2}.
\item Across the contours $\bar D_n\cap \bar D_m$, $M$ satisfies the jump condition
      \be\label{MRHP}
      M_n(x,t,k)=M_m(x,t,k)J_{m,n}(x,t,k),\quad k\in \bar D_n\cap \bar D_m,n,m=1,2,3,4.
      \ee
      where the jump matrices $J_{m,n}(x,t,k)$ are defined as (\ref{Jmndef}).
\item $M(x,t,k)=\id+O(\frac{1}{k}),\qquad k\rightarrow \infty$.
\item The residue condition of $M$ is showed in Proposition \ref{propos}.
\end{itemize}
\end{theorem}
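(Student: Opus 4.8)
The plan is to read off the four defining properties of the Riemann--Hilbert problem one at a time, observing that every ingredient except the recovery formula has already been assembled in Section 2, and then to extract $u$ from the large-$k$ expansion of $M$. The sectional analyticity of $M$ on each $\bar D_n$, its continuous extension to the boundary, and the normalization $M=\id+O(1/k)$ are precisely the content of the opening Proposition of Section 2 (equation (\ref{Mnasy})). The jump relation $M_n=M_mJ_{m,n}$ across $\bar D_n\cap\bar D_m$ with $J_{m,n}=e^{(-ikx-4ik^3t)\hat\Lam}(S_m^{-1}S_n)$ is exactly (\ref{Mjump})--(\ref{Jmndef}), which follows from $S_n(k)=M_n(0,0,k)$ and the common factorization (\ref{MnSnrel}). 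The meromorphic structure together with the explicit residue relations is Proposition \ref{propos}. Hence the only genuinely new analytic point is that all of these data, the $S_n$ and therefore the $J_{m,n}$ and the residue coefficients, are expressible through the prescribed initial and boundary values: this is supplied by the representation (\ref{Sn}) of the $S_n$ in terms of $s(k),S(k),S_L(k)$, which are in turn read off the relevant columns of the $\mu_j$ via (\ref{sSdef}).

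The heart of the statement is the recovery formula (\ref{usolRHP}). Here I would insert the asymptotic expansion
\[
M=\id+\frac{M^{(1)}(x,t)}{k}+O\!\left(\frac{1}{k^2}\right),\qquad k\to\infty,
\]
into the $x$-part of the Lax pair (\ref{muLaxe}), namely $M_x+ik[\Lam,M]=V_1M$, and collect powers of $k$. The $O(k^0)$ balance gives $i[\Lam,M^{(1)}]=V_1$. Reading off the $(1,3)$ entry and using $\Lam_{11}-\Lam_{33}=2$ together with $(V_1)_{13}=u$ yields $2i\,M^{(1)}_{13}=u$, which is exactly (\ref{usolRHP}) since $M^{(1)}_{13}=\lim_{k\to\infty}(kM)_{13}$. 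The $(2,3)$ entry gives $2i\,M^{(1)}_{23}=\bar u$ consistently, a useful self-check. Because the $M_n$ were built from eigenfunctions of the Lax pair whose potential $V_1$ contains the original $u$, this computation returns precisely that same $u$, which is what makes the formula a genuine reconstruction.

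The step where I expect the real work to lie is showing that the Riemann--Hilbert problem determines $M$ uniquely, so that (\ref{usolRHP}) unambiguously returns the potential. Since the problem is $3\times3$ and the jump matrices carry no Schwarz-reflection symmetry of the kind that trivializes the scalar or $2\times2$ case, uniqueness cannot be read off a single conjugation symmetry. The plan is the determinant-plus-Liouville route: first establish $\det M\equiv1$, using that the conjugation factor $e^{(-ikx-4ik^3t)\hat\Lam}$ is determinant-preserving, that the determinant relations among $s,S,S_L$ force $\det J_{m,n}=1$, and that $\det M\to1$ at infinity; this already guarantees $M$ is invertible. Then, given two solutions $M,\tilde M$, the ratio $M\tilde M^{-1}$ has no jump across any $\bar D_n\cap\bar D_m$, and the matched residue conditions of Proposition \ref{propos} make its apparent poles at the $\{k_j\}$ removable, so it is entire, bounded, and equal to $\id$ at infinity, hence identically $\id$.

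The delicate technical point specific to the interval problem is controlling the growth of the exponentials $e^{\tha_{ij}}$ appearing in the residue coefficients, where $\tha_{13}=\tha_{23}=-\tha_{31}=-\tha_{32}=-2ikx-8ik^3t$, across the rays bounding the four domains $D_1,\dots,D_4$, so that the singularities genuinely cancel in $M\tilde M^{-1}$ rather than merely appearing to; this is exactly where the careful bookkeeping of the sign conditions in (\ref{gamijndef}) and of the membership of each $k_j$ in the correct $D_n$ (Assumption \ref{zeroassump}) is indispensable. Once uniqueness is secured, the identification of the $u$ produced by (\ref{usolRHP}) with the given solution follows immediately from the asymptotic computation of the second step, completing the proof.
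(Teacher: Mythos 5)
Your proposal is correct and follows essentially the same route as the paper: the paper's proof consists of observing that the RHP data (analyticity, jumps, normalization, residues) were already established in Section 2, and that the reconstruction formula (\ref{usolRHP}) ``follows from the large $k$ asymptotics of the eigenfunctions,'' which is exactly the computation you carry out by inserting $M=\id+M^{(1)}/k+O(1/k^2)$ into the $x$-part of (\ref{muLaxe}) and reading off $i[\Lam,M^{(1)}]=V_1$ at order $k^0$. Your final two paragraphs on uniqueness of the RHP solution (the $\det M\equiv 1$ plus Liouville argument and the cancellation of residues in $M\tilde M^{-1}$) address a genuine issue that the paper simply does not attempt to prove, so they are additional material rather than a divergence from the paper's argument.
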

\begin{proof}
It only remains to prove (\ref{usolRHP}) and this equation follows from the large $k$ asymptotics of the eigenfunctions.
\end{proof}

\section{The solution of the global relation}
A major difficulty of initial-boundary value problems is that some of the boundary values are unknown for a well-posed problem. All boundary values are needed for the definition of $S(k),S_L(k)$, and hence for the formulation of the Riemann-Hilbert problem. In this section, our main result, theorem 4.2, expresses the spectral function $S(k),S_L(k)$ in terms of the prescribed boundary data and the initial data via the solution of a system of nonlinear integral equations.

We define functions $\{\Phi_{ij}(t,k)\}_{i,j=1}^{3}$ and $\{\phi_{ij}(t,k)\}_{i,j=1}^{3}$ by:
\be
\footnotesize
\ba{ll}
\mu_2(0,t,k)=\left(\ba{lll}\Phi_{11}(t,k)&\Phi_{12}(t,k)&\Phi_{13}(t,k)\\
\Phi_{21}(t,k)&\Phi_{22}(t,k)&\Phi_{23}(t,k)\\\Phi_{31}(t,k)&\Phi_{32}(t,k)&\Phi_{33}(t,k)\ea\right),
&
\mu_3(L,t,k)=\left(\ba{lll}\phi_{11}(t,k)&\phi_{12}(t,k)&\phi_{13}(t,k)\\
\phi_{21}(t,k)&\phi_{22}(t,k)&\phi_{23}(t,k)\\\phi_{31}(t,k)&\phi_{32}(t,k)&\phi_{33}(t,k)\ea\right).
\ea
\ee

\par
We will first derive Gelfand-Levitan-Marchenko (GLM) representations for the eigenfunctions $\Phi_{ij}$ and $\phi_{ij}$, then consider the solution of the global relation, which leads to expressions for the unknown boundary values from the known ones in terms of the GLM representations.

\subsection{The GLM representation}

\begin{theorem}
The eigenfunctions $\Phi_{ij}$ and $\phi_{ij}$ admit the following GLM representations,
\small
\be\label{PhiGLM}
\Phi_{ij}(t,k)=\dta_{ij}+\int_{-t}^{t}\left(\left[\left(\tilde{L}-\frac{1}{2}V^{(2)}_{2}\Lam \tilde{M}+\frac{1}{4}V^{(2)}_{2x}N\right)+ik\left(\tilde{M}+\frac{1}{2}V^{(2)}_{2}\Lam N\right)+k^2N
\right]e^{4ik^3(t-s)\Lam}\right)_{ij}ds
\ee
\small
\be\label{phiGLM}
\phi_{ij}(t,k)=\dta_{ij}+\int_{-t}^{t}\left(\left[\left(\tilde{\mathcal{L}}-\frac{1}{2}V^{(2)}_{2}\Lam \tilde{\mathcal{M}}+\frac{1}{4}V^{(2)}_{2x}\mathcal{N}\right)+ik\left(\tilde{\mathcal{M}}+\frac{1}{2}V^{(2)}_{2}\Lam \mathcal{N}\right)+k^2\mathcal{N}
\right]e^{4ik^3(t-s)\Lam}\right)_{ij}ds
\ee
\normalsize
where $\dta_{ij}=\left\{\ba{ll}1,&i=j\\0,&i\ne j\ea\right.$ and the $3\times 3$ matrices $\tilde L(t,s),\tilde M(t,s)$ and $N(t,s)$ satisfy the initial conditions
\begin{subequations}\label{GLMint}
\be
\left\{
\ba{l}
N(t,-t)+\Lam N(t,-t)\Lam=0,\\
\tilde{M}(t,-t)+\Lam \tilde{M}(t,-t)\Lam=0,\\
\tilde{L}(t,-t)+\Lam \tilde{L}(t,-t)\Lam=0.
\ea
\right.
\ee
\be
\left\{
\ba{l}
N(t,t)-\Lam N(t,t)\Lam=4V^{(2)}_{2},\\
\tilde{M}(t,t)-\Lam \tilde{M}(t,t)\Lam=2\Lam V^{(2)}_{2x},\\
\tilde{L}(t,t)-\Lam \tilde{L}(t,t)\Lam=2(V^{(2)}_{2})^3-V^{(2)}_{2xx}.
\ea
\right.
\ee
\end{subequations}
and an ODE systems
\begin{subequations}\label{GLMsys}
\be
\left\{
\ba{l}
N_{t}+\Lam N_{s} \Lam =\left[(V^{(2)}_{2})^3)-V^{(2)}_{2xx}\right]N-2\Lam V^{(2)}_{2x}\tilde{M}+4V^{(2)}_{2}\tilde{L},\\
\tilde{M}_{t}+\Lam \tilde{M}_{s} \Lam=\left[(V^{(2)}_{2})^3-V^{(2)}_{2xx}\right]\tilde{M}+{\bf A}\Lam N+2\Lam V^{(2)}_{2x}\tilde{L},\\
\tilde{L}_{t}+\Lam \tilde{L}_{s} \Lam=\left[2(V^{(2)}_{2})^3-V^{(2)}_{2xx}\right]\tilde{L}+{\bf B}\Lam \tilde{M}+{\bf D} N
\ea
\right.
\ee
where the matrices ${\bf A,B,D}$ are defined as
\begin{subequations}
\be
\ba{rcl}
{\bf A}&=&\frac{3}{2}(V^{(2)}_{2})^4-\frac{1}{2}(V^{(2)}_{2}V^{(2)}_{2xx}+V^{(2)}_{2xx}V^{(2)}_{2})+\frac{1}{2}(V^{(2)}_{2x})^2\\
&&+\frac{1}{2}[(V^{(2)}_{2})^2V^{(2)}_{2x}+V^{(2)}_{2x}(V^{(2)}_{2})^2-V^{(2)}_{2}V^{(2)}_{2x}V^{(2)}_{2}]-\frac{1}{2}\dot{V}^{(2)}_{2}
\ea
\ee
\be
{\bf B}=\frac{1}{2}(V^{(2)}_{2}V^{(2)}_{2xx}+V^{(2)}_{2xx}V^{(2)}_{2})-\frac{1}{2}(V^{(2)}_{2x})^{2}-\frac{3}{2}(V^{(2)}_{2})^4+\frac{1}{2}\dot{V}^{(2)}_{2}-\frac{1}{2}[V^{(2)}_{2x}(V^{(2)}_{2})^2-V^{(2)}_{2}V^{(2)}_{2x}V^{(2)}_{2}]
\ee
\be
\ba{rcl}
{\bf D}&=&-\frac{1}{4}\dot{V}^{(2)}_{2x}+\frac{3}{4}(V^{(2)}_{2})^5+\frac{1}{4}V^{(2)}_{2x}V^{(2)}_{2}V^{(2)}_{2x}-\frac{1}{4}((V^{(2)}_{2})^2V^{(2)}_{2xx}+V^{(2)}_{2}V^{(2)}_{2xx}V^{(2)}_{2})\\
&&+\frac{1}{4}V^{(2)}_{2}\dot{V}^{(2)}_{2}+\frac{1}{4}(V^{(2)}_{2x}V^{(2)}_{2xx}-V^{(2)}_{2xx}V^{(2)}_{2x})+\frac{1}{4}((V^{(2)}_{2})^3V^{(2)}_{2x}-V^{(2)}_{2x}(V^{(2)}_{2})^3)
\ea
\ee
\end{subequations}
\end{subequations}

Analogously, the functions $\{\tilde{\mathcal{L}}(t,s),\tilde{\mathcal{M}}(t,s),\mathcal{N}(t,s)\}$ satisfy the similar system of equations with $\{g_0(t),g_1(t),g_2(t)\}$ replaced by $\{f_0(t),f_1(t),f_2(t)\}$.
\end{theorem}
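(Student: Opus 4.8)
The plan is to obtain both representations by inserting a Volterra-type integral ansatz, polynomial in $k$, into the $t$-part of the Lax pair and matching, the whole scheme closing because the cubic dispersion $e^{4ik^3(t-s)\Lam}$ can absorb the surplus powers of $k$ through integration by parts.

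First I would record the problems solved by the two eigenfunctions. Setting $\Phi(t,k)=\mu_2(0,t,k)$ and $\phi(t,k)=\mu_3(L,t,k)$, the second equation in (\ref{muLaxe}), restricted to $x=0$ and to $x=L$, gives
\be
\Phi_t+4ik^3[\Lam,\Phi]=V_2\Phi,\qquad \phi_t+4ik^3[\Lam,\phi]=V_2\phi,
\ee
where $V_2=4k^2V^{(2)}_{2}+2ikV^{(1)}_{2}+V^{(0)}_{2}$ is evaluated with $(u,u_x,u_{xx})$ replaced by $(g_0,g_1,g_2)$ in the first equation and by $(f_0,f_1,f_2)$ in the second. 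Since the contours $\gam_2,\gam_3$ are anchored at $(0,0)$ and $(L,0)$ we have $\mu_2(0,0,k)=\mu_3(L,0,k)=\id$, so both eigenfunctions carry the initial condition $\Phi(0,k)=\phi(0,k)=\id$ and are entire in $k$. It therefore suffices to treat $\Phi$; the function $\phi$ is identical after the replacement $g_i\to f_i$, which is exactly what produces $\{\tilde{\mathcal L},\tilde{\mathcal M},\mathcal N\}$ from $\{\tilde L,\tilde M,N\}$.

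Next I would substitute the ansatz
\be
\Phi(t,k)=\id+\int_{-t}^{t}\mathcal K(t,s,k)\,e^{4ik^3(t-s)\Lam}\,ds,\qquad \mathcal K=A(t,s)+ik\,B(t,s)+k^2N(t,s),
\ee
a polynomial of degree two in $k$, namely one less than the order of the dispersion. Differentiating by the Leibniz rule produces boundary terms at $s=\pm t$ together with an integrand. The mechanism that makes everything collapse to degree two is the exact identity $4ik^3\Lam\mathcal K\,e^{4ik^3(t-s)\Lam}=-\Lam\mathcal K\Lam\,\pt_s e^{4ik^3(t-s)\Lam}$ (using $\Lam^2=\id$); integrating this by parts converts the $k^3$-commutator into $\Lam\mathcal K_s\Lam\,e^{4ik^3(t-s)\Lam}$ plus boundary contributions, and the companion relation $k^3 e^{4ik^3(t-s)\Lam}=\tfrac{i}{4}\Lam\,\pt_s e^{4ik^3(t-s)\Lam}$ lets me lower, again by integration by parts, the over-quadratic terms $k^3,k^4$ generated by the product $V_2\mathcal K$. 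Once every power $k^m$ with $m\ge 3$ has been turned into an $s$-derivative plus boundary data, I would treat the non-exponential part, the $e^{8ik^3t\Lam}$ part and the part under $\int e^{4ik^3(t-s)\Lam}ds$ as independent and equate them separately. The integral part, split at orders $k^0,k^1,k^2$, yields the evolution system (\ref{GLMsys}); the boundary terms collected at $s=t$ and at $s=-t$ yield the two families in (\ref{GLMint}). The particular groupings $\tilde L-\tfrac{1}{2} V^{(2)}_{2}\Lam\tilde M+\tfrac{1}{4} V^{(2)}_{2x}N$ and $\tilde M+\tfrac{1}{2} V^{(2)}_{2}\Lam N$ of (\ref{PhiGLM}) are simply the reparametrisation of $A$ and $B$ that renders these matched equations in the clean form stated, and $\mathbf A,\mathbf B,\mathbf D$ are read off as the matrix coefficients surviving the reduction.

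Finally I would establish existence and identification. The system (\ref{GLMsys}) is linear and first order in $(t,s)$, and the $\Lam(\cdot)_s\Lam$ structure decouples it along the characteristics $s\pm t=\mathrm{const}$; together with the data (\ref{GLMint}) prescribed on $s=\pm t$ it is solved by integrating along characteristics (equivalently, by Picard iteration), giving unique smooth kernels $\tilde L,\tilde M,N$. Reversing the substitution then shows that the $\Phi$ built from these kernels satisfies the same linear $t$-ODE with the same value $\id$ at $t=0$, so by uniqueness for linear ODEs it equals $\mu_2(0,t,k)$, which proves (\ref{PhiGLM}); (\ref{phiGLM}) follows verbatim with $g_i\to f_i$. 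I expect the main obstacle to be purely computational: removing the $k^4$ and $k^3$ contributions of $V_2\mathcal K$ by repeated integration by parts, and faithfully collecting the many boundary terms they spawn, is what manufactures the intricate right-hand sides of (\ref{GLMsys}) and the nonlinear expressions for $\mathbf A,\mathbf B,\mathbf D$. This is precisely the step that is heavier here than in the two-component nonlinear Schr\"odinger case, the dispersion being cubic rather than quadratic.
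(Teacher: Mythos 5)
Your proposal follows essentially the same route as the paper's own proof: substitute the degree-two-in-$k$ GLM ansatz $\id+\int_{-t}^{t}\left(L+ikM+k^2N\right)e^{4ik^3(t-s)\Lam}ds$ into the $t$-part of the Lax pair (\ref{muLaxe}), separate the $s=t$ terms, the $s=-t$ terms (carrying $e^{8ik^3t\Lam}$) and the integrand (lowering the excess powers of $k$ coming from $V_2\mathcal{K}$ by integration by parts against the exponential, using $\Lam^2=\id$), and then pass to $\tilde L,\tilde M$ via the substitution (\ref{LMtoLMtilde}) to arrive at (\ref{GLMint}) and (\ref{GLMsys}). The only difference is that you additionally spell out the solvability of the resulting characteristic (Goursat-type) system for the kernels and the ODE-uniqueness identification of the reconstructed $\Phi$ with $\mu_2(0,t,k)$ — steps the paper leaves implicit — which is a welcome completion rather than a departure.
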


\begin{proof}
We just prove the GLM representations (\ref{PhiGLM}) of the eigenfunctions $\Phi_{ij}$, the GLM representations (\ref{phiGLM}) is similar. We substitute the following equation
\be
 \mu(t,k)=\id+\int_{-t}^{t}\left(L(t,s)+ik M(t,s)+k^2 N(t,s)\right)e^{4ik^3(t-s)\Lam}ds
\ee
into the $t-$part of the Lax pair (\ref{muLaxe}), then, we can find the matrices $L(t,s),M(t,s)$ and $N(t,s)$ are satisfied
\begin{subequations}
\be
\left\{
\ba{l}
N(t,-t)+\Lam N(t,-t)\Lam =0,\\
M(t,-t)+\Lam M(t,-t)\Lam +V^{(2)}_{2}N(t,-t)\Lam=0,\\
L(t,-t)+\Lam L(t,-t)\Lam -V^{(2)}_{2}M(t,-t)\Lam -\frac{1}{2}V^{(1)}_{2}N(t,-t)\Lam=0.
\ea
\right.
\ee
\be
\left\{
\ba{l}
N(t,t)-\Lam N(t,t)\Lam=4V^{(2)}_{2},\\
{M}(t,t)-\Lam {M}(t,t)\Lam=2\Lam V^{(1)}_{2}+V^{(2)}_{2}N(t,t)\Lam,\\
{L}(t,t)-\Lam {L}(t,t)\Lam=V^{(0)}_{2}-V^{(2)}_{2}M(t,t)\Lam-\frac{1}{2}V^{(1)}_{2}N(t,t)\Lam.
\ea
\right.
\ee
\be
\left\{
\ba{l}
N_{t}(t,s)+\Lam N_{s}(t,s)\Lam=4V^{(2)}_{2}L(t,s)-2V^{(1)}_{2}M(t,s)+V^{(0)}_{2}N(t,s),\\
M_{t}(t,s)+\Lam M_{s}(t,s)\Lam=2V^{(1)}_{2}L(t,s)+V^{(0)}_{2}M(t,s)-V^{(2)}_{2}N_{s}(t,s)\Lam,\\
L_{t}(t,s)+\Lam L_{s}(t,s)\Lam=V^{(0)}_{2}L(t,s)+V^{(2)}_{2}M_{s}(t,s)\Lam+\frac{1}{2}V^{(1)}_{2}N_{s}(t,s)\Lam,
\ea
\right.
\ee
\end{subequations}
If we introducing the following transformations,
\be\label{LMtoLMtilde}
\ba{l}
M(t,s)=\tilde{M}(t,s)+\frac{1}{2}V^{(2)}_{2}\Lam N(t,s),\\
L(t,s)=\tilde{L}(t,s)-\frac{1}{2}V^{(2)}_{2}\Lam \tilde{M}(t,s)+\frac{1}{4}V^{(2)}_{2x}N(t,s).
\ea
\ee
We can get the initial conditions (\ref{GLMint}) and the ODE systems (\ref{GLMsys}).
\end{proof}

\subsection{The Analysis of the global relation}
To avoid routine technical complications, we will continue our analysis of the global relation in the case $s(k)=\id$ corresponding to the zero initial conditions $u_0(x)=u(x,0)=0$. In this case, the global relation (\ref{globalrel}) takes the form
\be\label{zeroglobal}
\Phi(t,k)e^{ikL\hat\Lam}\left(\ol{\phi(t,\bar k)}\right)^{T}=c(t,k)
\ee
The $(1,3)th$ term of (\ref{zeroglobal}) is
\be\label{global13th}
\Phi_{11}(t,k)\ol{\phi_{31}(t,\bar k)}e^{2ikL}+\Phi_{12}(t,k)\ol{\phi_{32}(t,\bar k)}e^{2ikL}+\Phi_{13}(t,k)\ol{\phi_{33}(t,\bar k)}=c_{13}(t,k),\quad k\in D_1\cup D_3,
\ee
where $c_{13}(t,k)$ has the following properties which are important for the analysis of the global relation
\be\label{c13proper}
\ba{ll}
c_{13}(t,k)=O\left(\frac{1}{k}\right),& k\in D_{1},\\
e^{-2ikL}c_{13}(t,k)=O\left(\frac{1}{k}\right),& k\in D_{3}.
\ea
\ee

In the following, the given boundary conditions are $g_0(t),f_0(t)$ and $f_1(t)$, and we are looking for expressions for $g_1(t),g_2(t)$ and $f_2(t)$.

\par
Let us give some notation.  Define $\omega=e^{\frac{2\pi}{3}}$ and suppose that $\Pi(t,k)$ is a scalar function.
\begin{itemize}\label{notation}
  \item Let $\hat{\Pi}(t,k)$ and denotes the following notation:
    \be
    \ba{l}
    \hat{\Pi}(t,k)=\Pi(t,k)+\omega \Pi(t,\omega k)+\omega^2 \Pi(t,\omega^2 k),\\
     \widehat{\ol{\Pi}}(t,\bar k)=\ol{\Pi}(t,\bar k)+\omega \ol{\Pi}(t,\ol{\omega k})+\omega^2 \ol{\Pi}(t,\ol{\omega^2 k})
     \ea
    \ee
  \item Let $\tilde{\Pi}(t,k)$ denotes the following notation:
    \be
    \ba{l}
    \tilde{\Pi}(t,k)=\Pi(t,k)+\omega^2 \Pi(t,\omega k)+\omega \Pi(t,\omega^2 k),\\
     \tilde{\ol{\Pi}}(t,\bar k)=\ol{\Pi}(t,\bar k)+\omega^2 \ol{\Pi}(t,\ol{\omega k})+\omega \ol{\Pi}(t,\ol{\omega^2 k})
     \ea
    \ee

\end{itemize}

\begin{theorem}
Consider the IBV problem for the equation (\ref{KdV-typee}) on the interval.
Given boundary conditions $g_0(t),f_0(t)$ and $f_1(t)$, and we have the expressions for $g_1(t),g_2(t)$ and $f_2(t)$ as follows:
\begin{subequations}\label{g12f2}
\be
\ba{rcl}
g_1(t)&=&\frac{g_0(t)}{\pi}\int_{\pt D^0}\hat{\Phi}_{33}(t,k)dk+\frac{6}{\pi}\int_{\pt D^{0}}(-ik)[\Gamma^{-1}(k)]_3\left(\ba{c}\tilde{G}_1(t,k)\\\tilde{G}_1(t,\omega k)\\e^{-2i\omega^2 kL}\tilde{G}_1(t,\omega^2 k)\ea\right)dk\\
&&{}{}+\frac{6}{\pi}\int_{\pt D^{0}}(-ik)[\Gamma^{-1}(k)]_3\left(\ba{c}G_2(t,k)\\G_2(t,\omega k)\\e^{-2i\omega^2 kL}G_2(t,\omega^2 k)\ea\right)e^{8ik^3(t-t')}dk
\ea
\ee
\be
\ba{rcl}
g_2(t)&=&-2|g_0(t)|^2g_0(t)-\frac{2ig_0(t)}{\pi}\int_{\pt D^0}\tilde{\Phi}_{33}(t,k)dk+\frac{g_1(t)}{\pi}\int_{\pt D^0}\hat{\Phi}_{33}(t,k)dk\\
&&{}+\frac{6}{\pi}\int_{\pt D^{0}}(k^2)[\Gamma^{-1}(k)]_2\left(\ba{c}\tilde{G}_1(t,k)\\\tilde{G}_1(t,\omega k)\\e^{-2i\omega^2 kL}\tilde{G}_1(t,\omega^2 k)\ea\right)dk\\
&&{}+\frac{6}{\pi}\int_{\pt D^{0}}(k^2)[\Gamma^{-1}(k)]_2\left(\ba{c}G_2(t,k)\\G_2(t,\omega k)\\e^{-2i\omega^2 kL}G_2(t,\omega^2 k)\ea\right)e^{8ik^3(t-t')}dk
\ea
\ee
\be
\ba{rcl}
f_2(t)&=&-2|f_0(t)|^2f_0(t)-\frac{2i}{\pi}\left[f_0(t)\int_{\pt D^0}k\tilde{\ol{\phi}}_{11}(t,\bar k)dk+\bar{f}_0(t)\int_{\pt D^0}k\tilde{\ol{\phi}}_{21}(t,\bar k)dk\right]\\
&&{}+\frac{1}{\pi}\int_{\pt D^0}\left[f_1(t)\widehat{\ol{\phi}}_{11}(t,\bar k)dk+\bar{f_1}(t)\widehat{\ol{\phi}}_{21}(t,\bar k)\right]dk\\
&&+\frac{6}{\pi}\int_{\pt D^{0}}(k^2)[\Gamma^{-1}(k)]_1\left(\ba{c}\tilde{G}_1(t,k)\\\tilde{G}_1(t,\omega k)\\e^{-2i\omega^2 kL}\tilde{G}_1(t,\omega^2 k)\ea\right)dk\\
&&{}+\frac{6}{\pi}\int_{\pt D^{0}}(k^2)[\Gamma^{-1}(k)]_1\left(\ba{c}G_2(t,k)\\G_2(t,\omega k)\\e^{-2i\omega^2 kL}G_2(t,\omega^2 k)\ea\right)e^{8ik^3(t-t')}dk
\ea
\ee
where
\be
\ba{rl}
\tilde{G}_1(t,k)=&e^{2ikL}\left\{-\frac{1}{3}\tilde{\ol{\phi}}_{31}(t,\bar k)-\frac{1}{4k^2}\left[f_1(t)-\frac{1}{\pi}\int_{\pt D^0}\left(f_0(t)\widehat{\ol{\phi}}_{11}(t,\bar k)+\bar{f}_0(t)\widehat{\ol{\phi}}_{21}(t,\bar k)\right)dk\right]\right\}\\
&{}-e^{2ikL}\left[\frac{1}{3}\widehat{\ol{\phi}}_{31}(t,\bar k)+\frac{f_0(t)}{2ik}\right]-\left[\frac{1}{3}\hat{\Phi}_{13}(t,k)-\frac{g_0(t)}{2ik}\right].
\ea
\ee
\be
G_2(t,k)=\left[(\Phi_{11}(t,k)-1)\ol{\phi}_{31}(t,\bar k)+\Phi_{12}(t,k)\ol{\phi}_{32}(t,\bar k)\right]-\Phi_{13}(t,k)\left[\ol{\phi}_{33}(t,\bar k)-1\right],
\ee
and $[\Gamma^{-1}(k)]_j,j=1,2,3,$ denotes the $j-th$ row of the inverse matrix of the $\Gamma(k)$, which is defined as (\ref{Gammak}).
\end{subequations}

\end{theorem}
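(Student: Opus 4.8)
The plan is to combine the Gelfand--Levitan--Marchenko (GLM) representations of Theorem 4.1 with the scalar global relation, exploiting the $\Z_3$-symmetry $k\mapsto\om k$ forced by the cubic dispersion $k^3$. Working, as in the statement, with zero initial data $s(k)=\id$ so that the global relation reduces to the $(1,3)$ component \eqref{global13th}, I would first substitute the GLM formulas \eqref{PhiGLM}--\eqref{phiGLM} for $\Phi_{ij}(t,k)$ and $\ol{\phi_{ij}(t,\bar k)}$ into that relation. Since every eigenfunction entry is an integral of the polynomial $\tilde L+ik\tilde M+k^2N$ (and its calligraphic analogue) against $e^{4ik^3(t-s)\Lam}$, each entry carries explicit $k^0,k^1,k^2$ coefficients whose boundary-diagonal values are pinned down by the conditions \eqref{GLMint}: the matrices $N(t,t),\tilde M(t,t),\tilde L(t,t)$ encode $g_0,g_1,g_2$ through $V_2^{(2)}$ and its $x$-derivatives at $x=0$, and likewise $f_0,f_1,f_2$ at $x=L$. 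Reading off these coefficients is what ties the analytic objects $\Phi,\phi$ back to the six boundary functions and exposes the unknowns $g_1,g_2,f_2$.

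The second step is the $\om$-symmetrization encoded by the operations $\hat\Pi$ and $\tilde\Pi$. Because $\om^3=1$ leaves $k^3$ invariant, the time exponentials $e^{8ik^3(t-s)}$ are unchanged under $k\mapsto\om k$, whereas the monomials $k$ and $k^2$ pick up $\om$ and $\om^2$; summing the global relation over $k,\om k,\om^2 k$ against the weights $1,\om,\om^2$ (respectively $1,\om^2,\om$) then projects onto a prescribed power of $k$ modulo $3$, isolating the coefficients that carry $g_1,g_2,f_2$ while annihilating the rest. For each fixed $t$ this yields three linear relations among the rotated evaluations, which I would assemble into a single $3\times3$ system with coefficient matrix $\Gamma(k)$ from \eqref{Gammak}; the unknown Neumann data then emerge as the solved components, which is exactly why $g_1,g_2,f_2$ appear against the third, second, and first rows of $\Gamma^{-1}(k)$ in \eqref{g12f2}.

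The third step removes the unknown right-hand side $c_{13}(t,k)$. One integrates the symmetrized relation over the boundary $\pt D^0$ of the relevant sector; by the decay \eqref{c13proper}, namely $c_{13}=O(1/k)$ in $D_1$ and $e^{-2ikL}c_{13}=O(1/k)$ in $D_3$, together with analyticity of the eigenfunctions in the appropriate $D_n$, Cauchy's theorem and contour deformation show that the $c_{13}$ contributions produce no unknown terms. Only the explicit, boundary-data-dependent integrals survive, and applying $\Gamma^{-1}(k)$ delivers the stated formulas, with the auxiliary functions $\tilde G_1$ and $G_2$ precisely recording the $f$-data terms and the quadratic coupling $(\Phi-\id)(\ol\phi)$ left over from the substitution.

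The principal obstacle, and the main departure from the two-component NLS analysis of \cite{jf3}, is the genuinely third-order structure in $k$: the GLM kernels contain a true $k^2N$ term, so the second Neumann data $g_2,f_2$ enter at the highest order and must be disentangled from the lower-order $g_0,g_1,f_0,f_1$ contributions. The crux is to verify that the $\Z_3$-projection cleanly separates these three orders and that the resulting matrix $\Gamma(k)$ is invertible on $\pt D^0$; this demands careful bookkeeping of which monomials survive each symmetrization and a check, via \eqref{c13proper} and the domain analyticity, that the contour integrals of the unknown $c_{13}$ vanish. Once this separation is secured, the remaining work --- inserting the asymptotic coefficients and inverting the $3\times3$ system algebraically --- is routine.
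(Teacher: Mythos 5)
Your plan follows the paper's own proof essentially step for step: substitute the GLM representations into the $(1,3)$ entry of the global relation with $s(k)=\id$, rotate $k\mapsto\om k,\om^2 k$ to assemble the $3\times 3$ system with matrix $\Gamma(k)$, integrate over $\pt D^0$ so that the $c_{13}$ terms drop by the decay properties and Jordan's lemma, and recover $g_1,g_2,f_2$ from the GLM initial conditions at $s=t$, with the leftover kernels re-expressed through the $\om$-projections $\hat\Pi,\tilde\Pi$ of $\Phi_{ij},\phi_{ij}$. The only ingredients the paper makes explicit that you leave implicit are the regularizing factor $e^{8ik^3(t-t')}$ with the limit $t'\to t$ and the diagonal weight $\mathrm{diag}(k^2,k^2,-ik)$ used before inverting $\Gamma$, but these are technical refinements of the same argument.
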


\begin{proof}
Substitute the GLM representations of $\Phi_{ij}$ and $\phi_{ij}$ into (\ref{global13th}) and rewrite the resulting equation in the form:
\be\label{global13thre}
\ba{l}
e^{2ikL}\int_{-t}^{t}\ol{\mathcal{L}_{31}}(t,s)e^{4ik^3(s-t)}ds+\int_{-t}^{t}L_{13}(t,s)e^{4ik^3(s-t)}ds+ik\int_{-t}^{t}M_{13}(t,s)e^{4ik^3(s-t)}ds\\
=G_{1}(t,k)+G_{2}(t,k)+c_13(t,k)\\
\ea
\ee
where
\begin{subequations}
\be
\ba{rl}
G_1(t,k)=&e^{2ikL}\left\{ik\int_{-t}^{t}\ol{\mathcal{M}_{31}}(t,s)e^{4ik^3(s-t)}ds-k^2\int_{-t}^{t}\ol{\mathcal{N}_{31}}(t,s)e^{4ik^3(s-t)}ds\right\}\\
&{}-k^2\int_{-t}^{t}N_{13}(t,s)e^{4ik^3(s-t)}ds
\ea
\ee
\be
\ba{rl}
G_2(t,k)=&-e^{2ikL}\left\{\int_{-t}^{t}\left[L_{11}(t,s)+ik M_{11}(t,s)+k^2N_{11}(t,s)\right]e^{4ik^3(t-s)}ds\right\}\times\\
&\left\{\int_{-t}^{t}\left[\ol{\mathcal{L}_{31}}(t,s)-ik\ol{\mathcal{M}_{31}}(t,s)+k^2\ol{\mathcal{N}_{31}}(t,s)\right]e^{4ik^3(s-t)}ds\right\}\\
&-e^{2ikL}\left\{\int_{-t}^{t}\left[L_{12}(t,s)+ik M_{12}(t,s)+k^2N_{12}(t,s)\right]e^{4ik^3(t-s)}ds\right\}\times \\ &\left\{\int_{-t}^{t}\left[\ol{\mathcal{L}_{32}}(t,s)-ik\ol{\mathcal{M}_{32}}(t,s)+k^2\ol{\mathcal{N}_{32}}(t,s)\right]e^{4ik^3(s-t)}ds\right\}\\
&-\left\{\int_{-t}^{t}\left[L_{13}(t,s)+ik M_{13}(t,s)+k^2N_{13}(t,s)\right]e^{4ik^3(s-t)}ds\right\}\times \\ &\left\{\int_{-t}^{t}\left[\ol{\mathcal{L}_{33}}(t,s)-ik\ol{\mathcal{M}_{33}}(t,s)+k^2\ol{\mathcal{N}_{33}}(t,s)\right]e^{4ik^3(t-s)}ds\right\}
\ea
\ee
\end{subequations}

Let $D=\{k\left|0<\arg{k}<\frac{\pi}{3}\right.\}$. Consider (\ref{global13thre}) in $D$ as well as replacing $k$ by $\omega k$ and by $\omega^2 k$ in (\ref{global13thre}), we obtain three equations, which are valid for $k\in D$. These equations can be written in the vector form as follows:
\be\label{globalvcetor}
\Gamma(k)U(t,k)=H_1(t,k)+H_2(t,k)+H_c(t,k),\quad k\in D,
\ee
where
\be\label{Gammak}
\ba{ll}
\Gamma(k)=\left(\ba{ccc}e^{2ikL}&1&1\\e^{2i\omega kL}&1&\omega\\1&e^{-2i\omega^2 kL}&\omega^2e^{-2i\omega^2 kL}\ea\right),& H_j(t,k)=\left(\ba{c}G_j(t,k)\\G_j(t,\omega k)\\e^{-2i\omega^2 kL}G_j(t,\omega^2 k)\ea\right),j=1,2,\\
U(t,k)=\left(\ba{c}\int_{-t}^{t}\ol{\mathcal{L}_{31}}(t,s)e^{4ik^3(s-t)}ds\\\int_{-t}^{t}L_{13}(t,s)e^{4ik^3(s-t)}ds\\ik\int_{-t}^{t}M_{13}(t,s)e^{4ik^3(s-t)}ds\ea\right)
&H_c(t,k)=\left(\ba{c}c_{13}(t,k)\\c_{13}(t,\omega k)\\e^{-2i\omega^2 kL}c_{13}(t,\omega^2 k)\ea\right).
\ea
\ee
Notice that $\det{\Gamma(k)}\rightarrow \omega-1\ne 0$ as $|k|\rightarrow \infty,k\in \bar D$.

\par
Multiply (\ref{globalvcetor}) by $\left(\ba{ccc}k^2&0&0\\0&k^2&0\\0&0&-ik\ea\right)\Gamma^{-1}(k)e^{8ik^3(t-t')},0<t'<t$, and integrate along the contour $\pt D^{0}$, which is the boundary of $D$ deformed to pass above the zeros of the $\det{\Gamma(k)}$. Then (\ref{c13proper}) implies that the terms containing $H_c$ vanishes, by Jordan's lemma.

\par
In order to evaluate the other terms we will use the following identities (see, e.g. \cite{afscmp} or \cite{fl3}):
\be\label{changeintegral}
\int_{-t}^{t}f(t,s)e^{4ik^3(s-t)}ds=2\int_{0}^{t}f(t,2\tau-t)e^{8ik^3(\tau-t)}ds
\ee
where $f(t,s)$ is an arbitrary function such that the integral is well-defined, and
\begin{subequations}\label{globaliden}
\be\label{k2iden}
\int_{\pt D^{0}}k^2\int_{0}^{t}\alpha(\tau)e^{8ik^3(t-t')}d\tau dk=\frac{\pi}{12}\alpha(t'),
\ee
\be\label{k34iden}
\int_{\pt D^{0}}k^m\int_{0}^{t}\alpha(\tau)e^{8ik^3(t-t')}d\tau dk=\int_{\pt D^{0}}k^m\left(\int_{0}^{t'}\alpha(\tau)e^{8ik^3(t-t')}d\tau-\frac{1}{8ik^3}\alpha(t')\right)dk
\ee
where $m=3,4$ and $\alpha(\tau)$ is a smooth function for $0<\tau<t$. Then the integration by parts together with Jordan's lemma show that one can pass to the limit as $t'\rightarrow t$ in the right-hand side of (\ref{k34iden}).
\end{subequations}

\par
Applying (\ref{k34iden}) to the integral term containing $H_1$ one obtains
\be
\small
\ba{l}
\int_{\pt D^{0}}{\left(\ba{ccc}k^2&0&0\\0&k^2&0\\0&0&-ik\ea\right)\Gamma^{-1}(k)H_1(t,k)e^{8ik^3(t-t')}dk}\\
{}{}=\int_{\pt D^{0}}{\left(\ba{ccc}k^2&0&0\\0&k^2&0\\0&0&-ik\ea\right)\Gamma^{-1}(k)\left(\ba{c}\tilde{G}_1(t,t',k)\\\tilde{G}_1(t,t',\omega k)\\e^{-2i\omega^2 kL}\tilde{G}_1(t,t',\omega^2 k)\ea\right)dk}
\ea
\ee
where
\be\label{G1tild}
\footnotesize
\ba{rl}
\tilde{G}_1(t,t',k)=&2e^{2ikL}\left\{ik\int_{0}^{t'}\left[\ol{\tilde{\mathcal{M}}_{31}}(t,2\tau-t)-\frac{1}{2}\left[f_0(t)\ol{\mathcal{N}}_{11}(t,2\tau-t)+\bar{f}_0(t)
\ol{\mathcal{N}}_{21}(t,2\tau-t)\right]\right]e^{8ik^3(\tau-t')}d\tau\right.\\
&{}\left.-\frac{1}{8k^2}\left[\ol{\tilde{\mathcal{M}}_{31}}(t,2t'-t)-\frac{1}{2}\left[f_0(t)\ol{\mathcal{N}}_{11}(t,2t'-t)+\bar{f}_0(t)
\ol{\mathcal{N}}_{21}(t,2t'-t)\right]\right]\right\}\\
&{}-2e^{2ikL}\left\{k^2\int_{0}^{t'}\ol{\mathcal{N}}_{31}(t,2\tau-t)e^{8ik^3(\tau-t')}d\tau-\frac{1}{8ik}\ol{\mathcal{N}}_{31}(t,2t'-t)\right\}\\
&{}-2\left\{k^2\int_{0}^{t'}N_{13}(t,2\tau-t)e^{8ik^3(\tau-t')}d\tau-\frac{1}{8ik}N_{13}(t,2t'-t)\right\}.
\ea
\ee
Applying (\ref{k2iden}) to the integral in the left-hand side of (\ref{globalvcetor}) and using the equations (\ref{LMtoLMtilde}) we arrive at the equation
\be
\scriptsize
\ba{l}
\left(\ba{c}\ol{\tilde{\mathcal{L}}}_{31}(t,2t'-t)+\frac{1}{2}\left[f_0(t)\ol{\tilde{\mathcal{M}}}_{11}(t,2t'-t)+\bar{f}_0(t)\ol{\tilde{\mathcal{M}}}_{21}(t,2t'-t)\right]-\frac{1}{4}\left[f_1(t)\ol{\mathcal{N}}_{11}(t,2t'-t)+\bar{f}_1(t)\ol{\mathcal{N}}_{21}(t,2t'-t)\right]\\
\tilde{L}_{13}(t,2t'-t)+\frac{1}{2}g_0(t)\tilde{M}_{33}(t,2t'-t)+\frac{1}{4}g_1(t)N_{33}(t,2t'-t)\\
\tilde{M}_{13}(t,2t'-t)-\frac{1}{2}g_0(t)N_{33}(t,2t'-t)
\ea\right)\\
{}=\frac{6}{\pi}\int_{\pt D^{0}}\left(\ba{ccc}k^2&0&0\\0&k^2&0\\0&0&-ik\ea\right)\Gamma^{-1}(k)\left(\ba{c}\tilde{G}_1(t,t',k)\\\tilde{G}_1(t,t',\omega k)\\e^{-2i\omega^2 kL}\tilde{G}_1(t,t',\omega^2 k)\ea\right)dk\\
{}{}{}+\frac{6}{\pi}\int_{\pt D^{0}}\left(\ba{ccc}k^2&0&0\\0&k^2&0\\0&0&-ik\ea\right)\Gamma^{-1}(k)\left(\ba{c}G_2(t,k)\\G_2(t,\omega k)\\e^{-2i\omega^2 kL}G_2(t,\omega^2 k)\ea\right)e^{8ik^3(t-t')}dk
\ea
\ee
Evaluating this equation at $t'=t$ and using the initial conditions (\ref{GLMint}) we find the following equations for $g_1(t),g_2(t)$ and $f_2(t)$.
\begin{subequations}\label{g12f2}
\be
\ba{rcl}
g_1(t)&=&\frac{1}{2}g_0(t)N_{33}(t,t)+\frac{6}{\pi}\int_{\pt D^{0}}(-ik)[\Gamma^{-1}(k)]_3\left(\ba{c}\tilde{G}_1(t,k)\\\tilde{G}_1(t,\omega k)\\e^{-2i\omega^2 kL}\tilde{G}_1(t,\omega^2 k)\ea\right)dk\\
&&{}{}+\frac{6}{\pi}\int_{\pt D^{0}}(-ik)[\Gamma^{-1}(k)]_3\left(\ba{c}G_2(t,k)\\G_2(t,\omega k)\\e^{-2i\omega^2 kL}G_2(t,\omega^2 k)\ea\right)e^{8ik^3(t-t')}dk
\ea
\ee
\be
\ba{rcl}
g_2(t)&=&-4|g_0(t)|^2g_0(t)+g_0(t)\tilde{M}_{33}(t,t)+\frac{1}{2}g_1(t)N_{33}(t,t)\\
&&{}+\frac{6}{\pi}\int_{\pt D^{0}}(k^2)[\Gamma^{-1}(k)]_2\left(\ba{c}\tilde{G}_1(t,k)\\\tilde{G}_1(t,\omega k)\\e^{-2i\omega^2 kL}\tilde{G}_1(t,\omega^2 k)\ea\right)dk\\
&&{}+\frac{6}{\pi}\int_{\pt D^{0}}(k^2)[\Gamma^{-1}(k)]_2\left(\ba{c}G_2(t,k)\\G_2(t,\omega k)\\e^{-2i\omega^2 kL}G_2(t,\omega^2 k)\ea\right)e^{8ik^3(t-t')}dk
\ea
\ee
\be
\ba{rcl}
f_2(t)&=&-4|f_0(t)|^2f_0(t)-\left[f_0(t)\ol{\tilde{\mathcal{M}}}_{11}(t,t)+\bar{f_0}(t)\ol{\tilde{\mathcal{M}}}_{21}(t,t)\right]\\
&&{}+\frac{1}{2}\left[f_1(t)\ol{\mathcal{N}}_{11}(t,t)+\bar{f_1}(t)\ol{\mathcal{N}}_{21}(t,t)\right]\\
&&+\frac{6}{\pi}\int_{\pt D^{0}}(k^2)[\Gamma^{-1}(k)]_1\left(\ba{c}\tilde{G}_1(t,k)\\\tilde{G}_1(t,\omega k)\\e^{-2i\omega^2 kL}\tilde{G}_1(t,\omega^2 k)\ea\right)dk\\
&&{}+\frac{6}{\pi}\int_{\pt D^{0}}(k^2)[\Gamma^{-1}(k)]_1\left(\ba{c}G_2(t,k)\\G_2(t,\omega k)\\e^{-2i\omega^2 kL}G_2(t,\omega^2 k)\ea\right)e^{8ik^3(t-t')}dk
\ea
\ee
where $[\Gamma^{-1}(k)]_j,j=1,2,3,$ denotes the $j-th$ row of $\Gamma^{-1}(k)$ and
\end{subequations}
\be
\footnotesize
\ba{rl}
\tilde{G}_1(t,k)=&2e^{2ikL}\left\{ik\int_{0}^{t}\left[\ol{\tilde{\mathcal{M}}_{31}}(t,2\tau-t)-\frac{1}{2}\left[f_0(t)\ol{\mathcal{N}}_{11}(t,2\tau-t)+\bar{f}_0(t)
\ol{\mathcal{N}}_{21}(t,2\tau-t)\right]\right]e^{8ik^3(\tau-t)}d\tau\right.\\
&{}\left.-\frac{1}{8k^2}\left[f_1(t)-\frac{1}{2}\left[f_0(t)\ol{\mathcal{N}}_{11}(t,t)+\bar{f}_0(t)
\ol{\mathcal{N}}_{21}(t,t)\right]\right]\right\}\\
&{}-2e^{2ikL}\left\{k^2\int_{0}^{t}\ol{\mathcal{N}}_{31}(t,2\tau-t)e^{8ik^3(\tau-t)}d\tau+\frac{1}{4ik}f_0(t)\right\}\\
&{}-2\left\{k^2\int_{0}^{t}N_{13}(t,2\tau-t)e^{8ik^3(\tau-t)}d\tau-\frac{1}{4ik}g_0(t)\right\}.
\ea
\ee
\par
The functions $\tilde{G}_1(t,k),G_2(t,k),N_{33}(t,t),\tilde{M}_{33}(t,t),\ol{\tilde{\mathcal{M}}}_{11}(t,t),\ol{\tilde{\mathcal{M}}}_{21}(t,t),$ and $\ol{\mathcal{N}}_{11}(t,t),\ol{\mathcal{N}}_{21}(t,t)$ involved in (\ref{g12f2}) can be expressed in terms of $\Phi_{ij}(t,k)$ and $\phi_{ij}(t,k)$.

\par
Indeed, $G_2(t,k)$ can be written as follows:
\be\label{G2def}
G_2(t,k)=\left[(\Phi_{11}(t,k)-1)\ol{\phi}_{31}(t,\bar k)+\Phi_{12}(t,k)\ol{\phi}_{32}(t,\bar k)\right]-\Phi_{13}(t,k)\left[\ol{\phi}_{33}(t,\bar k)-1\right],
\ee
and we recall that
\be\label{Phi33}
\Phi_{33}(t,k)=1+\int_{-t}^{t}\left[L_{33}(t,s)+ik M_{33}(t,s)+k^2N_{33}(t,s)\right]e^{4ik^3(s-t)}ds,
\ee
and then we have
\begin{subequations}
\be
3k^2\int_{-t}^{t}N_{33}(t,s)e^{4ik^3(s-t)}ds=\Phi_{33}(t,k)+\omega \Phi_{33}(t,\omega k)+\omega^2\Phi_{33}(t,\omega^2 k),
\ee
\be
3ik\int_{-t}^{t}M_{33}(t,s)e^{4ik^3(s-t)}ds=\Phi_{33}(t,k)+\omega^2 \Phi_{33}(t,\omega k)+\omega\Phi_{33}(t,\omega^2 k).
\ee
\end{subequations}
Integrating along the contour $\pt D^{0}$ and applying (\ref{changeintegral}) and (\ref{k2iden}) to the above equations we arrive at the equations
\begin{subequations}\label{N33M33tilde}
\be
N_{33}(t,t)=\frac{2}{\pi}\int_{\pt D^{0}}\left[\Phi_{33}(t,k)+\omega \Phi_{33}(t,\omega k)+\omega^2\Phi_{33}(t,\omega^2 k)\right]dk,
\ee
\be
\tilde{M}_{33}(t,t)=2|g_0(t)|^2-\frac{2i}{\pi}\int_{\pt D^{0}}\left[\Phi_{33}(t,k)+\omega^2 \Phi_{33}(t,\omega k)+\omega\Phi_{33}(t,\omega^2 k)\right]dk.
\ee
\end{subequations}
Similarly, we have
\begin{subequations}\label{N11N21tilde}
\be
\ol{\mathcal{N}}_{11}(t,t)=\frac{2}{\pi}\int_{\pt D^{0}}\left[\ol{\phi}_{11}(t,\bar k)+\omega \ol{\phi}_{11}(t,\ol{\omega k})+\omega^2\ol{\phi}_{11}(t,\ol{\omega^2 k})\right]dk,
\ee
\be
\ol{\mathcal{N}}_{21}(t,t)=\frac{2}{\pi}\int_{\pt D^{0}}\left[\ol{\phi}_{21}(t,\bar k)+\omega \ol{\phi}_{21}(t,\ol{\omega k})+\omega^2\ol{\phi}_{21}(t,\ol{\omega^2 k})\right]dk.
\ee
\end{subequations}
\begin{subequations}\label{M11M21tilde}
\be
\ol{\tilde{\mathcal{M}}}_{11}(t,t)=-|f_0(t)|^2+\frac{2i}{\pi}\int_{\pt D^{0}}k\left[\ol{\phi}_{11}(t,\bar k)+\omega^2 \ol{\phi}_{11}(t,\ol{\omega k})+\omega \ol{\phi}_{11}(t,\ol{\omega^2 k})\right]dk,
\ee
\be
\ol{\tilde{\mathcal{M}}}_{21}(t,t)=-f^2_0(t)+\frac{2i}{\pi}\int_{\pt D^{0}}k\left[\ol{\phi}_{21}(t,\bar k)+\omega^2 \ol{\phi}_{21}(t,\ol{\omega k})+\omega \ol{\phi}_{21}(t,\ol{\omega^2 k})\right]dk.
\ee
\end{subequations}

\par
From the equation
\be
\ol{\phi}_{31}(t,\bar k)=\int_{-t}^{t}\left[\ol{\mathcal{L}}_{31}(t,s)-ik\ol{\mathcal{M}}_{31}(t,s)+k^2\ol{\mathcal{N}}_{31}(t,s)\right]e^{4ik^3(s-t)}ds,
\ee
we have
\begin{subequations}
\be
3k^2\int_{-t}^{t}\ol{\mathcal{N}}_{31}(t,s)e^{4ik^3(s-t)}ds=\ol{\phi}_{31}(t,\bar k)+\omega \ol{\phi}_{31}(t,\ol{\omega k})+\omega^2 \ol{\phi}_{31}(t,\ol{\omega^2 k}),
\ee
\be
-3ik\int_{-t}^{t}\ol{\mathcal{M}}_{31}(t,s)e^{4ik^3(s-t)}ds=\ol{\phi}_{31}(t,\bar k)+\omega^2 \ol{\phi}_{31}(t,\ol{\omega k})+\omega \ol{\phi}_{31}(t,\ol{\omega^2 k}),
\ee
\end{subequations}
and these imply
\begin{subequations}
\be
k^2\int_{0}^{t}\ol{\mathcal{N}}_{31}(t,2\tau-t)e^{8ik^3(\tau-t)}d\tau=\frac{1}{6}\left[\ol{\phi}_{31}(t,\bar k)+\omega \ol{\phi}_{31}(t,\ol{\omega k})+\omega^2 \ol{\phi}_{31}(t,\ol{\omega^2 k})\right],
\ee
\be
ik\int_{0}^{t}\ol{\mathcal{M}}_{31}(t,2\tau-t)e^{8ik^3(\tau-t)}d\tau=-\frac{1}{6}\left[\ol{\phi}_{31}(t,\bar k)+\omega^2 \ol{\phi}_{31}(t,\ol{\omega k})+\omega \ol{\phi}_{31}(t,\ol{\omega^2 k})\right].
\ee
\end{subequations}
Notice that
\be
\ol{\mathcal{M}}_{31}(t,s)=\ol{\tilde{\mathcal{M}}}_{31}(t,s)-\frac{1}{2}\left[f_0(t)\ol{\mathcal{N}}_{11}(t,s)+\bar{f}_0(t)\ol{\mathcal{N}}_{21}(t,s)\right].
\ee

From
\be
\Phi_{13}(t,k)=\int_{-t}^{t}\left[L_{13}(t,s)+ikM_{13}(t,s)+k^2N_{13}(t,k)\right]e^{4ik^3(s-t)}ds,
\ee
we have
\be
3k^2\int_{-t}^{t}N_{13}(t,s)e^{4ik^3(s-t)}ds=\Phi_{13}(t,k)+\omega \Phi_{13}(t,\omega k)+\omega^2\Phi_{13}(t,\omega^2k),
\ee
it implies that
\be
k^2\int_{0}^{t}N_{13}(t,2\tau-t)e^{8ik^3(\tau-t)}d\tau=\frac{1}{6}\left[\Phi_{13}(t,k)+\omega \Phi_{13}(t,\omega k)+\omega^2\Phi_{13}(t,\omega^2k)\right].
\ee

\par
Hence,
\be\label{Gtilde1}
\ba{rl}
\tilde{G}_1(t,k)=&e^{2ikL}\left\{-\frac{1}{3}\left[\ol{\phi}_{31}(t,\bar k)+\omega^2 \ol{\phi}_{31}(t,\ol{\omega k})+\omega \ol{\phi}_{31}(t,\ol{\omega^2 k})\right]\right.\\
&{}-\frac{1}{4k^2}\left[f_1(t)-\frac{1}{\pi}f_0(t)\int_{\pt D^{0}}\left[\ol{\phi}_{11}(t,\bar k)+\omega \ol{\phi}_{11}(t,\ol{\omega k})+\omega^2\ol{\phi}_{11}(t,\ol{\omega^2 k})\right]dk\right.\\
&\left.\left.-\bar{f}_0(t)\frac{1}{\pi}\int_{\pt D^{0}}\left[\ol{\phi}_{21}(t,\bar k)+\omega \ol{\phi}_{21}(t,\ol{\omega k})+\omega^2\ol{\phi}_{21}(t,\ol{\omega^2 k})\right]dk\right]\right\}\\
&-e^{2ikL}\left\{\frac{1}{3}\left[\ol{\phi}_{31}(t,\bar k)+\omega \ol{\phi}_{31}(t,\ol{\omega k})+\omega^2 \ol{\phi}_{31}(t,\ol{\omega^2 k})\right]+\frac{1}{2ik}f_0(t)\right\}\\
&-\left\{\frac{1}{3}\left[\Phi_{13}(t,k)+\omega \Phi_{13}(t,\omega k)+\omega^2\Phi_{13}(t,\omega^2k)\right]-\frac{1}{2ik}g_0(t)\right\}.
\ea
\ee
Noticing that our notations \ref{notation} and using (\ref{N33M33tilde}), (\ref{N11N21tilde}), (\ref{M11M21tilde}), (\ref{Gtilde1}) and (\ref{G2def}) in (\ref{g12f2}) we obtain the equations for $g_1(t),g_2(t)$ and $f_2(t)$ in terms of $\Phi_{ij}(t,k)$ and $\phi_{ij}(t,k)$ as (\ref{g12f2}). These equations, together with (\ref{PhiGLM}) and the similar equation (\ref{phiGLM}) constitute a system of nonlinear ODEs for $\Phi_{ij}$ and $\phi_{ij}$.

\end{proof}


{\bf Acknowledgements}
This work of Xu was supported by National Natural
Science Foundation of China under project NO.11501365, Shanghai Sailing Program
supported by Science and Technology Commission of Shanghai Municipality
under Grant NO.15YF1408100, Shanghai youth teacher assistance program NO.ZZslg15056 and the Hujiang Foundation of China (B14005). Fan was support by grants from the National Natural
Science Foundation of China (Project No.10971031; 11271079; 11075055).

\end{document}